\begin{document}

\title{Opportunistic Downlink Interference Alignment}
\author{\large Hyun Jong Yang, \emph{Member}, \emph{IEEE}, Won-Yong Shin, \emph{Member}, \emph{IEEE}, \\ Bang Chul Jung, \emph{Senior Member}, \emph{IEEE}, Changho Suh, \emph{Member}, \emph{IEEE}, and
Arogyaswami Paulraj, \emph{Fellow}, \emph{IEEE} \\
\thanks{H. J. Yang is with the School of Electrical and Computer Engineering, UNIST, Ulsan 689-798, Republic of Korea (E-mail:
hjyang@unist.ac.kr).}
\thanks{W.-Y. Shin is with the Department of Computer Science and
Engineering, Dankook University, Yongin 448-701, Republic of Korea
(E-mail: wyshin@dankook.ac.kr).}
\thanks{B. C. Jung (corresponding author) is with the Department of Information and Communication Engineering, Gyeongsang National
University, Tongyeong 650-160, Republic of Korea (E-mail:
bcjung@gnu.ac.kr).}
\thanks{C. Suh is with the Department of Electrical Engineering, KAIST, Daejeon 305-701, Republic of Korea (E-mail: chsuh@kaist.ac.kr).}
\thanks{A. Paulraj is with the Department of Electrical Engineering,
Stanford University, Stanford, CA 94305 (email:
apaulraj@stanford.edu). }
} \maketitle


\markboth{Submitted to IEEE Transactions on Signal Processing} {Yang
{\em et al.}: Opportunistic Downlink Interference Alignment}


\newtheorem{definition}{Definition}
\newtheorem{theorem}{Theorem}
\newtheorem{lemma}{Lemma}
\newtheorem{example}{Example}
\newtheorem{corollary}{Corollary}
\newtheorem{proposition}{Proposition}
\newtheorem{conjecture}{Conjecture}
\newtheorem{remark}{Remark}

\def \diag{\operatornamewithlimits{diag}}
\def \min{\operatornamewithlimits{min}}
\def \max{\operatornamewithlimits{max}}
\def \log{\operatorname{log}}
\def \max{\operatorname{max}}
\def \rank{\operatorname{rank}}
\def \out{\operatorname{out}}
\def \exp{\operatorname{exp}}
\def \arg{\operatorname{arg}}
\def \E{\operatorname{E}}
\def \tr{\operatorname{tr}}
\def \SNR{\operatorname{SNR}}
\def \dB{\operatorname{dB}}
\def \ln{\operatorname{ln}}

\def \be {\begin{eqnarray}}
\def \ee {\end{eqnarray}}
\def \ben {\begin{eqnarray*}}
\def \een {\end{eqnarray*}}

\begin{abstract}
In this paper, we propose an opportunistic downlink interference
alignment (ODIA) for interference-limited cellular downlink, which
intelligently combines user scheduling and downlink IA techniques.
The proposed ODIA not only efficiently reduces the effect of
inter-cell interference from other-cell base stations (BSs) but also
eliminates intra-cell interference among spatial streams in the same
cell. We show that the minimum number of users required to achieve a
target degrees-of-freedom (DoF) can be fundamentally reduced, i.e.,
the fundamental user scaling law can be improved by using the ODIA,
compared with the existing downlink IA schemes. In addition, we
adopt a limited feedback strategy in the ODIA framework, and then
analyze the required number of feedback bits leading to the same
performance as that of the ODIA assuming perfect feedback. We also
modify the original ODIA in order to further improve sum-rate, which
achieves the optimal multiuser diversity gain, i.e., $\log \log N$,
per spatial stream even in the presence of downlink inter-cell
interference, where $N$ denotes the number of users in a cell.
Simulation results show that the ODIA significantly outperforms
existing interference management techniques in terms of sum-rate in
realistic cellular environments. Note that the ODIA operates in a
distributed and decoupled manner, while requiring no information
exchange among BSs and no iterative beamformer optimization between
BSs and users, thus leading to an easier implementation.
\end{abstract}

\begin{keywords}
Inter-cell interference, interference alignment, degrees-of-freedom
(DoF), transmit \& receive beamforming, limited feedback, multiuser
diversity, user scheduling.
\end{keywords}

\newpage


\section{Introduction}
Interference management has been taken into account as one of the
most challenging issues to increase the throughput of cellular
networks serving multiple users. In multiuser cellular environments,
each receiver may suffer from intra-cell and inter-cell
interference.
Interference alignment (IA) was proposed by fundamentally solving
the interference problem when there are multiple communication
pairs~\cite{V_Cadambe08_TIT}. It was shown that the IA scheme can
achieve the optimal degrees-of-freedom (DoF)\footnote{It is referred
that `optimal' DoF is achievable if the outer-bound on DoF for given
network configuration is achievable. } in the multiuser interference
channel with time-varying channel coefficients. Subsequent studies
have shown that the IA is also useful and indeed achieves the
optimal DoF in various wireless multiuser network setups:
multiple-input multiple-output (MIMO) interference
channels~\cite{K_Gomadam11_TIT, T_Gou10_TIT} and cellular
networks~\cite{C_Suh11_TC,C_Suh08_Allerton}. In particular, IA
techniques~\cite{C_Suh11_TC,C_Suh08_Allerton} for cellular uplink
and downlink networks, also known as the interfering multiple-access
channel (IMAC) or interfering broadcast channel (IBC), respectively,
have received much attention.
 The existing IA framework for cellular networks, however, still has
several practical challenges: the scheme proposed
in~\cite{C_Suh08_Allerton} requires arbitrarily large
frequency/time-domain dimension extension, and the scheme proposed
in~\cite{C_Suh11_TC} is based on iterative optimization of
processing matrices and cannot be optimally extended to an arbitrary
downlink cellular network in terms of achievable DoF.
%

In the literature, there are some results on the usefulness of
fading in single-cell downlink broadcast channels, where one can
obtain multiuser diversity gain along with user scheduling as the
number of users is sufficiently large: opportunistic
scheduling~\cite{R_Knopp95_ICC}, opportunistic
beamforming~\cite{P_Viswanath02_TIT}, and random
beamforming~\cite{M_Sharif05_TIT}. Scenarios exploiting multiuser
diversity gain have been studied also in ad hoc
networks~\cite{W_Shin14_TIT}, cognitive radio
networks~\cite{T_Ban09_TWC}, and cellular
networks~\cite{W_Shin12_TC}.

Recently, the concept of opportunistic IA~(OIA) was introduced
in~\cite{B_Jung11_CL,B_Jung11_TC,H_Yang13_TWC} for the $K$-cell
uplink network (i,e., IMAC model), where there are one $M$-antenna
base station (BS) and $N$ users in each cell. The OIA scheme
incorporates user scheduling into the classical IA framework by
opportunistically selecting $S$ ($S\le M$) users amongst the $N$
users in each cell in the sense that inter-cell interference is
aligned at a pre-defined interference space. It was shown
in~\cite{B_Jung11_TC,H_Yang13_TWC} that one can asymptotically
achieve the optimal DoF if the number of users in a cell is beyond a
certain value, i.e., if a certain user scaling condition is
guaranteed. For the $K$-cell downlink network (i.e., IBC model)
assuming one $M$-antenna base station (BS) and $N$ per-cell users,
studies on the OIA have been conducted in~\cite{W_Shin12_IEICE,
J_Jose12_Allerton, J_Lee13_TWC, H_Nguyen13_TSP,H_Nguyen13_arXiv,
J_Lee13_arXiv}. More specifically, the user scaling condition for
obtaining the optimal DoF was characterized for the $K$-cell
multiple-input single-output (MISO) IBC~\cite{W_Shin12_IEICE}, and
then such an analysis of the DoF achievability was extended to the
$K$-cell MIMO IBC with $L$ receive antennas at each
user~\cite{J_Jose12_Allerton, J_Lee13_TWC,
H_Nguyen13_TSP,H_Nguyen13_arXiv, J_Lee13_arXiv}---full DoF can be
achieved asymptotically, provided that $N$ scales faster than
${\mathsf{SNR}}^{KM-L}$, for the $K$-cell MIMO IBC using
OIA~\cite{H_Nguyen13_arXiv, J_Lee13_arXiv}, where SNR denotes the
received signal-to-noise ratio.

In this paper, we propose an \textit{opportunistic downlink IA
(ODIA)} framework as a promising interference management technique
for $K$-cell downlink networks, where each cell consists of one BS
with $M$ antennas and $N$ users having $L$ antennas each.
The proposed ODIA jointly takes into account user scheduling and
downlink IA issues. In particular, inspired by the precoder design
in~\cite{C_Suh11_TC}, we use two cascaded beamforming matrices to
construct our precoder at each BS. To design the first transmit
beamforming matrix, we use a user-specific beamforming, which
conducts a linear zero-forcing (ZF) filtering and thus eliminates
intra-cell interference among spatial streams in the same cell. To
design the second transmit beamforming matrix, we use a
predetermined reference beamforming matrix, which plays the same
role of random beamforming for cellular
downlink~\cite{W_Shin12_IEICE, H_Nguyen13_arXiv, J_Lee13_arXiv} and
thus efficiently reduces the effect of inter-cell interference from
other-cell BSs. On the other hand, the receive beamforming vector is
designed at each user in the sense of minimizing the total amount of
received inter-cell interference using \textit{local} channel state
information (CSI) in a decentralized manner. Each user feeds back
both the effective channel vector and the quantity of received
inter-cell interference to its home-cell BS. The user selection and
transmit beamforming at the BSs and the design of receive
beamforming at the users are completely decoupled. Hence, the ODIA
operates in a fully distributed manner while requiring no
information exchange among BSs and no iterative optimization between
transmitters and receivers, thereby resulting in an easier
implementation.


The main contribution of this paper is four-fold as follows.
\begin{itemize}
\item  We first show that the minimum number of users
required to achieve $S$ DoF ($S\le M$) can be fundamentally reduced
to $\mathsf{SNR}^{(K-1)S-L+1}$ by using the ODIA at the expense of
acquiring perfect CSI at the BSs from users, compared to the
existing downlink IA schemes requiring the user scaling law
$N=\omega(\mathsf{SNR}^{KS-L})$~\cite{H_Nguyen13_arXiv,
J_Lee13_arXiv},\footnote{$f(x) = \omega(g(x))$ implies that $\lim_{x
\rightarrow \infty} \frac{g(x)}{f(x)}=0$.} where $S$ denotes the
number of spatial streams per cell. The interference decaying rate
with respect to $N$ for given SNR is also characterized in regards
to the derived user scaling law.
\item We introduce a
limited feedback strategy in the ODIA framework, and then analyze
the required number of feedback bits leading to the same DoF
performance as that of the ODIA assuming perfect feedback, which is
given by $\omega\left( \log_2 \mathsf{SNR}\right)$.
\item  We modify the user scheduling part of the ODIA to achieve optimal multiuser diversity gain, i.e., $\log\log N$ per stream
even in the presence of downlink inter-cell interference.
\item To verify the ODIA schemes, we perform numerical
evaluation via computer simulations. Simulation results show that
the proposed ODIA significantly outperforms existing interference
management and user scheduling techniques in terms of sum-rate in
realistic cellular environments.
\end{itemize}

The remainder of this paper is organized as follows. Section
\ref{SEC:system} describes the system and channel models. Section
\ref{SEC:OIA} presents the overall procedure of the proposed ODIA.
In Section \ref{sec:achievability}, the DoF achievablility result is
shown. Section \ref{SEC:OIA_limited} presents the ODIA scheme with
limited feedback. In Section \ref{SEC:Threhold_ODIA}, the
achievability of the spectrally efficient ODIA leading to a better
sum-rate performance is characterized. Numerical results are shown
in Section \ref{SEC:Sim}. Section \ref{SEC:Conc} summarizes the
paper with some concluding remarks.


\section{System and Channel Models} \label{SEC:system}
We consider a $K$-cell MIMO IBC where each cell consists of a BS
with $M$ antennas and $N$ users with $L$ antennas each. The number
of selected users in each cell is denoted by $S (\le M)$. It is
assumed that each selected user receives a single spatial stream. To
consider nontrivial cases, we assume that $L < (K-1)S +1$, because
all inter-cell interference can be completely canceled at the
receivers (i.e., users) otherwise. The channel matrix from the
$k$-th BS to the $j$-th user in the $i$-th cell is denoted by
$\mathbf{H}_{k}^{[i,j]}\in \mathbb{C}^{L \times M}$, where $i,k\in
\mathcal{K} \triangleq \{ 1, \ldots, K\}$ and $j \in \mathcal{N}
\triangleq \{1, \ldots, N\}$. Each element of $\mathbf{H}_k^{[i,j]}$
is assumed to be independent and identically distributed (i.i.d.)
according to $\mathcal{CN}(0,1)$. In addition, quasi-static
frequency-flat fading is assumed, i.e., channel coefficients are
constant during one transmission block and change to new independent
values for every transmission block. Owing to the channel
reciprocity of time-division duplexing (TDD) systems, the $j$-th
user in the $i$-th cell can estimate the channels
$\mathbf{H}_{k}^{[i,j]}$, $k=1, \ldots, K$, using pilot signals sent
from all the BSs, i.e., the local CSI at the transmitters is
available. Figure \ref{fig:system_model} shows an example of the
MIMO IBC model, where $K=3$, $M=3$, $S=2$, $L=3$, and $N=2$. The
details in the figure will be described in the subsequent section.

\begin{figure} \label{fig:system_model}
\begin{center}
  \includegraphics[width=0.80\textwidth]{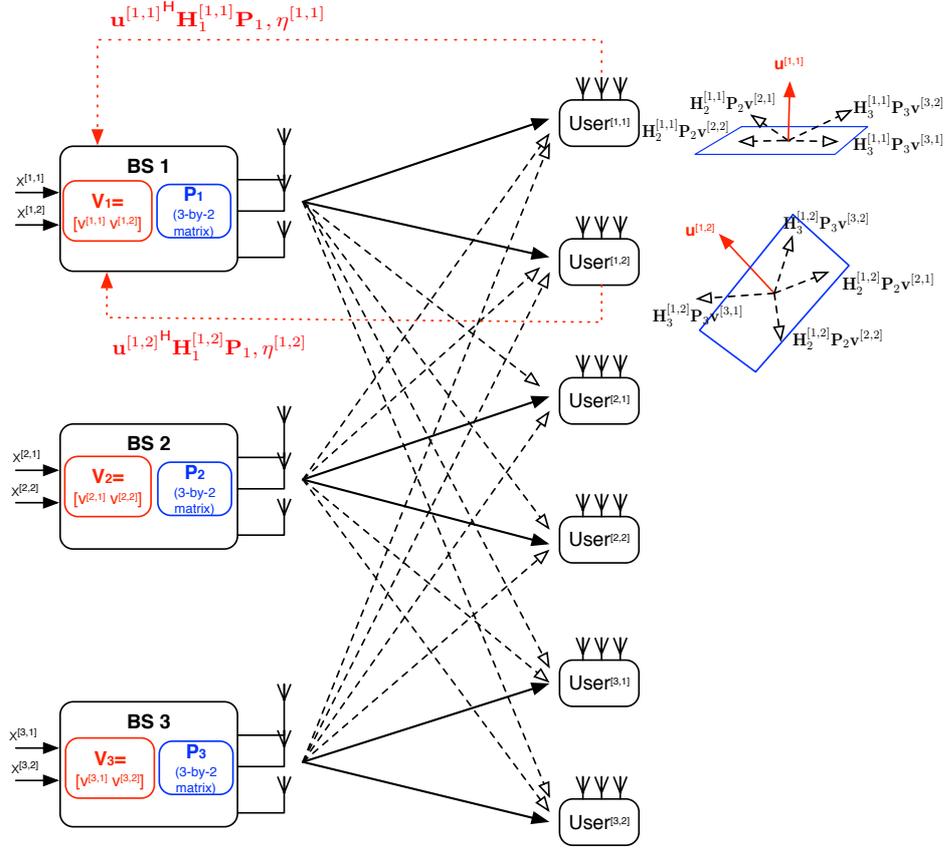}\\
  \caption{The MIMO IBC model, where $K=3$, $M=3$, $S=2$, $L=3$, and $N=2$.}\label{fig:system_model}
  \end{center}
\end{figure}


\section{Proposed ODIA} \label{SEC:OIA}

 We first describe the overall procedure of our proposed ODIA scheme for the MIMO IBC, and then define its achievable sum-rate and DoF.
\subsection{Overall Procedure} \label{subsec:overall}

The ODIA scheme is described according to the following four steps.
\subsubsection{Initialization (Broadcast of Reference Beamforming Matrices)}
First, as illustrated in Fig.~\ref{fig:system_model}, the precoding
matrix at each BS is composed of the product of a predetermined
reference beamforming matrix, denoted by $\mathbf{P}_k$, and a
user-specific beamforming matrix, denoted by $\mathbf{V}_k$. In this
step, we mainly focus on the design of $\mathbf{P}_k$. Specifically,
the reference beamforming matrix at the BS in the $k$-th cell is
given by $\mathbf{P}_k = \left[ \mathbf{p}_{1,k}, \ldots,
\mathbf{p}_{S,k}\right]$, where $\mathbf{p}_{s,k} \in \mathbb{C}^{M
\times 1}$ is an orthonormal basis for $k\in \mathcal{K}$ and $s =1,
\ldots, S$.
 Each BS independently generates $\mathbf{p}_{k,s}$ according to the isotropic distribution over the $M$-dimensional unit sphere.
 If the reference beamforming matrix is generated in a pseudo-random fashion, BSs do not need to broadcast them to users.
 Then, the $j$-th user in the $i$-th cell obtains $\mathbf{H}^{[i,j]}_{k}$ and $\mathbf{P}_k$, $k=1, \ldots, K$.
%

\subsubsection{Receive Beamforming  \& Scheduling Metric Feedback}
In the second step, we explain how to decide a user scheduling
metric at each user along with given receive beamforming, where the
design of receive beamforming will be explained in
Section~\ref{sec:achievability}. Let $\mathbf{u}^{[i,j]} \in
\mathbb{C}^{L \times 1}$ denote the unit-norm weight vector at the
$j$-th user in the $i$-th cell, i.e., $\left\| \mathbf{u}^{[i,j]}
\right\|^2 = 1$. Since the user-specific beamforming $\mathbf{V}_k$
will be utilized only to cancel intra-cell interference out,
$\mathbf{V}_k$ does not change the inter-cell interference level at
each user, which will be specified later. Thus, from the notion of
$\mathbf{P}_k$ and $\mathbf{H}^{[i,j]}_{k}$, the $j$-th user in the
$i$-th cell can compute the quantity of received interference from
the $k$-th BS while using its receive beamforming vector
$\mathbf{u}^{[i,j]}$, which is given by
\begin{align}\label{eq:eta_tilde}
\tilde{\eta}^{[i,j]}_{k} &= \left\|
{\mathbf{u}^{[i,j]}}^{\mathsf{H}}\mathbf{H}_{k}^{[i,j]} \mathbf{P}_k
\right\|^2,
\end{align}
where $i\in \mathcal{K}$, $j \in \mathcal{N} $, and $k\in
\mathcal{K}\setminus i= \{1, \ldots, i-1, i+1, \ldots, K\}$. Using
(\ref{eq:eta_tilde}), the scheduling metric at the $j$-th user in
the $i$-th cell, denoted by $\eta^{[i,j]}$, is defined as the sum of
received interference power from other cells. That is,
\begin{align} \label{eq:eta}
\eta^{[i,j]} &= \sum_{k=1, k\neq i}^{K} \tilde{\eta}^{[i,j]}_{k}.
\end{align}

As illustrated in Fig. \ref{fig:system_model}, each user feeds the
metric in (\ref{eq:eta}) back to its home-cell BS.
In addition to the scheduling metric in (\ref{eq:eta}), each user
needs to feed its effective channel vector back, so that the
user-specific beamforming $\mathbf{V}_k$ is designed at each BS. The
effective channel vector of the $j$-th user in the $i$-th cell is
given by
\begin{equation} \label{eq:effective_CH}
\mathbf{f}_{i}^{[i,j]} \triangleq
\left({\mathbf{u}^{[i,j]}}^{\mathsf{H}} \mathbf{H}^{[i,j]}_i
\mathbf{P}_ i\right)^{\mathsf{H}}.
\end{equation}

\subsubsection{User Scheduling}
Upon receiving $N$ users' scheduling metrics in the serving cell,
each BS selects $S$ users having the metrics up to the $S$-th
smallest one. Without loss of generality, the indices of selected
users in every cell are assumed to be $(1, \ldots, S)$.
In this and subsequent sections, we focus on how to simply design a
user scheduling method to guarantee the optimal DoF. An enhanced
scheduling algorithm jointly taking into account the effective
channel in (\ref{eq:effective_CH}) and the received interference
level in (\ref{eq:eta}) may provide a better performance in terms of
sum-rate, which shall be discussed in Section
\ref{SEC:Threhold_ODIA}.


\subsubsection{Transmit Beamforming \& Downlink Data Transmission} The signal vector at the $i$-th BS transmitted to the $j$-th user in the $i$-th cell is given by $\mathbf{v}^{[i,j]}x^{[i,j]}$, where $x^{[i,j]}$ is the transmit symbol with power of $1/S$, and the user-specific beamforming matrix for $S$ users is given by $\mathbf{V}_i = \left[ \mathbf{v}^{[i,1]}, \ldots, \mathbf{v}^{[i,S]}\right]$, where $\mathbf{v}^{[i,s]} \in \mathbb{C}^{S \times 1}$, $i\in \mathcal{K}$.
Denoting the transmit symbol vector of the $i$-th cell by
$\mathbf{x}_i = \left[ x^{[i,1]}, \ldots, x^{[i,S]}\right]^T$, the
received signal vector at the $j$-th user in the $i$-th cell is then
written as
\begin{align} \label{eq:rec_vector}
\mathbf{y}^{[i,j]} &= \mathbf{H}_i^{[i,j]}\mathbf{P}_i \mathbf{V}_i \mathbf{x}_i + \sum_{k=1, k\neq i}^{K} \mathbf{H}_k^{[i,j]}\mathbf{P}_k \mathbf{V}_k \mathbf{x}_k + \mathbf{z}^{[i,j]}  \nonumber \\
&= \underbrace{\mathbf{H}_i^{[i,j]}\mathbf{P}_i \mathbf{v}^{[i,j]} x^{[i,j]}}_{\textsf{desired signal}} +  \underbrace{\sum_{s=1, s\neq j}^{S} \mathbf{H}_i^{[i,j]}\mathbf{P}_i \mathbf{v}^{[i,s]} x^{[i,s]}}_{\textsf{intra-cell interference}} \nonumber \\
& \hspace{20pt}+ \underbrace{\sum_{k=1, k\neq i}^{K}
\mathbf{H}_k^{[i,j]}\mathbf{P}_k \mathbf{V}_k
\mathbf{x}_k}_{\textsf{inter-cell interference}} +
\mathbf{z}^{[i,j]},
\end{align}
  where $\mathbf{z}^{[i,j]} \in \mathbb{C}^{L \times 1}$ denotes the additive white Gaussian noise vector, each element of which is i.i.d. complex Gaussian with zero mean and the variance of $\mathsf{{SNR}^{-1}}$.
  The received signal vector at the $j$-th user in the $i$-th cell after receive beamforming, denoted by $\tilde{y}^{[i,j]} = {\mathbf{u}^{[i,j]}}^{\mathsf{H}} \mathbf{y}^{[i,j]}$, can be rewritten as:
\begin{align}\label{eq:rec_vector_after_BF}
\tilde{y}^{[i,j]}
&= {\mathbf{f}_{i}^{[i,j]}}^{\mathsf{H}} \mathbf{v}^{[i,j]}
x^{[i,j]}  +{\mathbf{f}_{i}^{[i,j]}}^{\mathsf{H}}\sum_{s=1, s\neq
j}^{S}
\mathbf{v}^{[i,s]} x^{[i,s]}\nonumber \\
&\hspace{10pt}  + \sum_{k=1, k\neq i}^{K}
{\mathbf{f}_{k}^{[i,j]}}^{\mathsf{H}} \mathbf{V}_k \mathbf{x}_k +
{\mathbf{u}^{[i,j]}}^{\mathsf{H}}\mathbf{z}^{[i,j]},
\end{align}
where
${\mathbf{f}_{k}^{[i,j]}}^{\mathsf{H}}={\mathbf{u}^{[i,j]}}^{\mathsf{H}}\mathbf{H}_k^{[i,j]}\mathbf{P}_k$.
By selecting users with small $\eta^{[i,j]}$ in (\ref{eq:eta}),
$\mathbf{H}_k^{[i,j]}\mathbf{P}_k$ tends to be orthogonal to the
receive beamforming vector $\mathbf{u}^{[i,j]}$; thus, inter-cell
interference channel matrices
$\mathbf{H}_k^{[i,j]}\mathbf{P}_k\mathbf{V}_k$ in
(\ref{eq:rec_vector_after_BF}) also tend to be orthogonal to
$\mathbf{u}^{[i,j]}$ as illustrated in Fig. \ref{fig:system_model}.

To cancel out intra-cell interference, the user-specific beamforming
matrix $\mathbf{V}_i \in \mathbb{C}^{S \times S}$is given by
\begin{align} \label{eq:ZF_BF}
\mathbf{V}_i &= [\mathbf{v}^{[i,1]},\mathbf{v}^{[i,2]}, \ldots, \mathbf{v}^{[i,S]}] \nonumber \\
 &= \begin{bmatrix}
       {\mathbf{u}^{[i,1]}}^{\mathsf{H}} \mathbf{H}_i^{[i,1]} \mathbf{P}_i  \\
       {\mathbf{u}^{[i,2]}}^{\mathsf{H}} \mathbf{H}_i^{[i,2]} \mathbf{P}_i  \\
       \vdots \\
       {\mathbf{u}^{[i,S]}}^{\mathsf{H}} \mathbf{H}_i^{[i,S]} \mathbf{P}_i
     \end{bmatrix}^{-1} \cdot  \begin{bmatrix}
       \sqrt{\gamma^{[i,1]}} & 0 & \cdots & 0  \\
       0 & \sqrt{\gamma^{[i,2]}} & \cdots & 0  \\
       \vdots & \vdots & \ddots & \vdots \\
       0 & 0 & \cdots & \sqrt{\gamma^{[i,S]}}  \\
     \end{bmatrix},
\end{align}
where $\sqrt{\gamma^{[i,j]}}$ denotes a normalization factor for
satisfying the unit-transmit power constraint. In consequence, the
received signal can be simplified to
\begin{align}\label{eq:rec_vector_ZF_BF}
\tilde{y}^{[i,j]} &= \sqrt{\gamma^{[i,j]}} x^{[i,j]} \nonumber  \\
& \hspace{20pt}+ \underbrace{\sum_{k=1, k\neq i}^{K}
{\mathbf{f}_{k}^{[i,j]}}^{\mathsf{H}}\mathbf{V}_k
\mathbf{x}_k}_{\textsf{inter-cell interference}} +
{\mathbf{u}^{[i,j]}}^{\mathsf{H}}\mathbf{z}^{[i,j]},
\end{align}
which thus does not contain the intra-cell interference term.

As in
\cite{N_Jindal06_TIT,T_Yoo07_JSAC,J_Thukral09_ISIT,R_Krishnamachari10_ISIT,S_Pereira07_Asilomar,B_Jung11_TC},
we assume no loss in exchanging signaling messages such as
information of effective channels, scheduling metrics, and receive
beamforming vectors.

\subsection{Achievable Sum-Rate and DoF}\label{subsec:sum_rate}
From (\ref{eq:rec_vector_ZF_BF}), the achievable rate of the $j$-th
user in the $i$-th cell is given by
\begin{align} \label{eq:data_rate_single_user}
R^{[i,j]}
&= \log_2 \left( 1+ \frac{ \gamma^{[i,j]} \cdot
|x^{[i,j]}|^2}{\left|{{\mathbf{u}^{[i,j]}}^{\mathsf{H}}}
\mathbf{z}^{[i,j]}\right|^2+\tilde{I}^{[i,j]}} \right) \nonumber \\
& =\log_2 \left( 1+ \frac{ \gamma^{[i,j]} }{\frac{S}{\mathsf{SNR}} +
\sum_{k=1, k\neq i}^{K} \sum_{s=1}^{S} \left|
{\mathbf{f}_{k}^{[i,j]}}^{\mathsf{H}} \mathbf{v}^{[k,s]}\right|^2 }
\right),
\end{align}
where $\tilde{I}^{[i,j]} \triangleq \sum_{k=1, k\neq i}^{K}
\left|{\mathbf{f}_{k}^{[i,j]}}^{\mathsf{H}} \mathbf{V}_k
\mathbf{x}_k\right|^2$. Using (\ref{eq:data_rate_single_user}), the
achievable total DoF can be defined as
\begin{equation}
\textrm{DoF} = \lim_{\textsf{SNR} \rightarrow \infty}
\frac{\sum_{i=1}^{K}\sum_{j=1}^{S}R^{[i,j]}}{\log \textsf{SNR}}.
\end{equation}

\section{DoF Achievability}\label{sec:achievability}
In this section, we characterize the DoF achievability in terms of
the user scaling law with the optimal receive beamforming technique.
To this end, we start with the receive beamforming design that
maximizes the achievable DoF. For given channel instance, from
(\ref{eq:data_rate_single_user}), each user can attain the maximum
DoF of 1 if and only if the interference $\sum_{k=1, k\neq i}^{K}
\sum_{s=1}^{S}
\Big|{\mathbf{f}_{k}^{[i,j]}}^{\mathsf{H}}\mathbf{v}^{[k,s]}\Big|^2
\cdot \mathsf{SNR}$ remains constant for increasing SNR. Note that
$R^{[i,j]}$ can be bounded as
\begin{align}
&R^{[i,j]}\ge \log_2 \left( 1+ \frac{ \gamma^{[i,j]} }{ \frac{S}{\mathsf{SNR}}+ \sum_{k=1, k\neq i}^{K} \sum_{s=1}^{S} \left\| \mathbf{f}_{k}^{[i,j]}\right\|^2 \left\| \mathbf{v}^{[k,s]}\right\|^2 } \right) \label{eq:data_rate_single_user_bound} \\
& \ge \log_2 \left( 1+ \frac{ \gamma^{[i,j]} }{ \frac{S}{\mathsf{SNR}}+ \sum_{k\neq i}^{K} \sum_{s=1}^{S} \left\| \mathbf{f}_{k}^{[i,j]} \right\|^2 \left\| \mathbf{v}^{(\max)}_{i}\right\|^2 } \right) \label{eq:data_rate_single_user_bound2}\\
& = \log_2\left( \mathsf{SNR}\right) + \log_2
\left(\frac{1}{\mathsf{SNR}}+ \frac{ \frac{\gamma^{[i,j]}}{\left\|
\mathbf{v}^{(\max)}_{i}\right\|^2}}{ \frac{S}{\left\|
\mathbf{v}^{(\max)}_{i}\right\|^2}+ I^{[i,j]} } \right)
\label{eq:data_rate_single_user_bound3}
\end{align}
where $\mathbf{v}^{(\max)}_{i}$ in
(\ref{eq:data_rate_single_user_bound2}) is defined by
\begin{align}
\mathbf{v}^{(\max)}_{i} &= \arg \max\bigg\{ \left\|
\mathbf{v}^{[i',j']}\right\|^2: i'\in \mathcal{K}\setminus i, j'\in
\mathcal{S}\bigg\},
\end{align}
$\mathcal{S} \triangleq \{1, \ldots, S\}$, and $I^{[i,j]}$ in
(\ref{eq:data_rate_single_user_bound3}) is defined by
\begin{align}
I^{[i,j]} \triangleq \sum_{k=1, k\neq i}^{K} \sum_{s=1}^{S}\left\|
\mathbf{f}_{k}^{[i,j]}\right\|^2 \cdot \mathsf{SNR}.
\end{align}
Here, $\mathbf{v}_i^{(\max)}$ is fixed for given channel instance,
because $\mathbf{v}^{[i,j]}$ is determined by
$\mathbf{H}_i^{[i,j]}$, $j=1, \ldots, S$. Recalling that the indices
of the selected users are $(1, \ldots, S)$ for all cells, we can
expect the DoF of 1 for each user if and only if for some $0 \le
\epsilon< \infty$,
\begin{equation}
I^{[i,j]} < \epsilon, \hspace{10pt} \forall j \in \mathcal{S}, i\in
\mathcal{K}.
\end{equation}


To maximize the achievable DoF, we aim to minimize the
sum-interference $\sum_{i=1}^{K} \sum_{j=1}^{S}I^{[i,j]}$ through
receive beamforming at the users. Since $I^{[i,j]} = \sum_{s=1}^{S}
\eta^{[i,j]} \mathsf{SNR}$, we have
\begin{equation} \label{eq:sum_interference_equiv}
\sum_{i=1}^{K} \sum_{j=1}^{S}I^{[i,j]} = S\sum_{i=1}^{K}
\sum_{j=1}^{S} \eta^{[i,j]}\mathsf{SNR}.
\end{equation}
This implies that the collection of distributed effort to minimize
$\eta^{[i,j]}$ at the users can reduce the sum of received
interference. Therefore, each user finds the beamforming vector that
minimizes $\eta^{[i,j]}$ from
\begin{align}
\mathbf{u}^{[i,j]} &= \arg \min_{\mathbf{u}} \eta^{[i,j]} =  \arg \min_{\mathbf{u}} \sum_{k=1, k\neq i}^{K} \left\| \mathbf{u}^{\mathsf{H}}\mathbf{H}_{k}^{[i,j]} \mathbf{P}_k \right\|^2 \\
& = \arg \min_{\mathbf{u}} \left\| \mathbf{G}^{[i,j]}\mathbf{u}
\right\|^2,
\end{align}
where
\begin{align} \label{eq:G_def}
\mathbf{G}^{[i,j]} &\triangleq \Bigg[ \left(\mathbf{H}_{1}^{[i,j]}\mathbf{P}_1\right), \ldots, \left(\mathbf{H}_{i-1}^{[i,j]}\mathbf{P}_{i-1}\right),   \nonumber \\
&
\hspace{20pt}\left(\mathbf{H}_{i+1}^{[i,j]}\mathbf{P}_{i+1}\right),
\ldots,
\left(\mathbf{H}_{K}^{[i,j]}\mathbf{P}_{K}\right)\Bigg]^{\mathsf{H}}
\in \mathbb{C}^{(K-1)S \times L}.
\end{align}

Let us denote the singular value decomposition of
$\mathbf{G}^{[i,j]}$ as
\begin{equation} \label{eq:G_SVD}
\mathbf{G}^{[i,j]} =
\boldsymbol{\Omega}^{[i,j]}\boldsymbol{\Sigma}^{[i,j]}{\mathbf{V}^{[i,j]}}^{\mathsf{H}},
\displaybreak[0]
\end{equation}
where $\boldsymbol{\Omega}^{[i,j]}\in \mathbb{C}^{(K-1)S\times L}$
and $\mathbf{V}^{[i,j]}\in \mathbb{C}^{L\times L}$ consist of $L$
orthonormal columns, and $\boldsymbol{\Sigma}^{[i,j]} =
\textrm{diag}\left( \sigma^{[i,j]}_{1}, \ldots,
\sigma^{[i,j]}_{L}\right)$, where $\sigma^{[i,j]}_{1}\ge \cdots
\ge\sigma^{[i,j]}_{L}$. \pagebreak[0] Then, the optimal
$\mathbf{u}^{[i,j]}$ is determined as
\begin{equation} \label{eq:W_SVD}
\mathbf{u}^{[i,j]}= \mathbf{v}^{[i,j]}_{L},
\end{equation}
where $\mathbf{v}^{[i,j]}_{L}$ is the $L$-th column of
$\mathbf{V}^{[i,j]}$. With this choice the scheduling metric is
simplified to
\begin{equation} \label{eq:LIF_beamforming_simple}
\eta^{[i,j]} = {\sigma^{[i,j]}_{L}}^2.
\end{equation}
Since each column of $\mathbf{P}_k$ is isotropically and
independently distributed, each element of the effective
interference channel matrix $\mathbf{G}^{[i,j]}$ is i.i.d. complex
Gaussian with zero mean and unit variance.

%
%
%
%
We start with the following lemma for the probabilistic interference
level of the ODIA, which shall be frequently used in the sequel.
\begin{lemma} \label{lemma:CDF_scaling}
The sum-interference remains constant with high probability for
increasing SNR, that is,
\begin{align}
\label{eq:P_def}\lim_{\textsf{SNR} \rightarrow \infty}
 \mathcal{P}&\triangleq \lim_{\textsf{SNR}\rightarrow \infty} \textrm{Pr} \Bigg\{\sum_{i=1}^{K}\sum_{j=1}^{S} I^{[i,j]} \le \epsilon \Bigg\}=1 \displaybreak[0]
\end{align}
for any $0<\epsilon<\infty$, if
\begin{equation}
N = \omega\left( \mathsf{SNR}^{(K-1)S-L+1} \right).
\end{equation}
\end{lemma}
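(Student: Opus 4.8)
The plan is to reduce the event in (\ref{eq:P_def}) to a statement about the order statistics of the per-user scheduling metrics $\eta^{[i,j]}={\sigma^{[i,j]}_{L}}^{2}$, and then to control that statement with a binomial concentration bound. From (\ref{eq:sum_interference_equiv}) we have $\sum_{i=1}^{K}\sum_{j=1}^{S}I^{[i,j]}=S\,\mathsf{SNR}\sum_{i=1}^{K}\sum_{j=1}^{S}\eta^{[i,j]}$, where for each cell $i$ the inner sum runs over the $S$ selected users, i.e., the users with the $S$ smallest values of $\eta^{[i,j]}$ among the $N$ users of cell $i$. Fixing $\delta\triangleq\epsilon/(KS^{2}\,\mathsf{SNR})$, if every cell contains at least $S$ users whose metric is at most $\delta$, then all $KS$ selected metrics are at most $\delta$, so that $\sum_{i,j}I^{[i,j]}\le S\,\mathsf{SNR}\cdot KS\cdot\delta=\epsilon$. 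Hence it suffices to show that the probability of this ``enough good users in every cell'' event tends to $1$ under the stated scaling of $N$.

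First I would determine the small-argument behavior of $F_{\eta}(x)\triangleq\textrm{Pr}\{\eta^{[i,j]}\le x\}$. As noted just before the lemma, the entries of $\mathbf{G}^{[i,j]}\in\mathbb{C}^{(K-1)S\times L}$ are i.i.d. $\mathcal{CN}(0,1)$, and the standing assumption $L<(K-1)S+1$ gives $L\le(K-1)S$, so ${\mathbf{G}^{[i,j]}}^{\mathsf{H}}\mathbf{G}^{[i,j]}$ is an almost surely full-rank $L\times L$ complex Wishart matrix with $(K-1)S$ degrees of freedom whose smallest eigenvalue is exactly $\eta^{[i,j]}$. Using the joint eigenvalue density of such a matrix, one obtains that the density of its smallest eigenvalue is of order $x^{(K-1)S-L}$ as $x\to 0^{+}$, hence $F_{\eta}(x)\ge c\,x^{(K-1)S-L+1}$ for all sufficiently small $x$, where $c>0$ depends only on $K,S,L$; this is the same computation underlying the interference-CDF lemmas for the IMAC in \cite{B_Jung11_TC,H_Yang13_TWC}. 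Note that only this lower bound is needed here.

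Next I would carry out the concentration step. Because the matrices $\mathbf{G}^{[i,j]}$ are i.i.d. over $(i,j)$, the number $B_{i}$ of users in cell $i$ with $\eta^{[i,j]}\le\delta$ is $\textrm{Binomial}(N,p)$ with $p=F_{\eta}(\delta)$. Since $\delta$ is of order $1/\mathsf{SNR}$ and thus vanishes, the previous step yields $p\ge c\,\delta^{(K-1)S-L+1}=c'\,\mathsf{SNR}^{-((K-1)S-L+1)}$ for large $\mathsf{SNR}$, with $c'=c\,(\epsilon/(KS^{2}))^{(K-1)S-L+1}$, so $\mathbb{E}[B_{i}]=Np\ge c'\,N\,\mathsf{SNR}^{-((K-1)S-L+1)}$. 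The hypothesis $N=\omega\!\left(\mathsf{SNR}^{(K-1)S-L+1}\right)$ forces $\mathbb{E}[B_{i}]\to\infty$. A Chernoff bound then gives $\textrm{Pr}\{B_{i}<S\}\le\textrm{Pr}\{B_{i}\le\tfrac{1}{2}\mathbb{E}[B_{i}]\}\le e^{-\mathbb{E}[B_{i}]/8}$ once $\mathsf{SNR}$ is large enough that $\mathbb{E}[B_{i}]\ge 2S$, and a union bound over the $K$ cells gives $\textrm{Pr}\{\exists\,i:\,B_{i}<S\}\le K\,e^{-\mathbb{E}[B_{i}]/8}$. Combining with the first paragraph, $\mathcal{P}\ge 1-K\,e^{-\mathbb{E}[B_{i}]/8}\to 1$, which is the claim.

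The step I expect to be the main obstacle is the first one: establishing that $F_{\eta}(x)$ is of order $x^{(K-1)S-L+1}$ near $x=0$, i.e., pinning down the exact power with which the smallest eigenvalue of the $(K-1)S\times L$ Gaussian Gram matrix puts probability mass near zero. Getting the exponent to be $(K-1)S-L+1$ (rather than, e.g., $(K-1)S-L$) is exactly what makes the user-scaling threshold $\mathsf{SNR}^{(K-1)S-L+1}$ the correct one; everything after this is a routine large-deviations estimate for a sum of i.i.d. Bernoulli random variables.
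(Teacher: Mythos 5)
Your proposal is correct and follows essentially the same route as the paper, which simply defers to Lemma 1 and Theorem 2 of \cite{H_Yang13_TWC}: the tail exponent $F_{\eta}(x)=\Theta\left(x^{(K-1)S-L+1}\right)$ for the smallest eigenvalue of the $L\times L$ Wishart matrix built from $\mathbf{G}^{[i,j]}$, followed by a binomial concentration argument showing each cell has at least $S$ users with metric below $\Theta(1/\mathsf{SNR})$ whenever $N=\omega\left(\mathsf{SNR}^{(K-1)S-L+1}\right)$. Your use of a Chernoff bound in place of the direct binomial tail estimate is an immaterial variation, and the step you flag as the main obstacle is precisely the content of \cite[Lemma 1]{H_Yang13_TWC}.
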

\begin{proof}
Since the cumulative density function (CDF) of $\eta^{[i,j]}$ is the
same as that of the scheduling metric of the MIMO IMAC
\cite{H_Yang13_TWC}, the lemma can be readily proved by following
the footsteps of Lemma 1 and Theorem 2 of \cite{H_Yang13_TWC}.
\end{proof}

Now, the following theorem establishes the DoF achievability of the
proposed ODIA.

\begin{theorem}[User scaling law] \label{theorem:DoF}
The proposed ODIA scheme with the scheduling metric
(\ref{eq:LIF_beamforming_simple}) achieves the optimal $KS$ DoF for
given $S$ with high probability  if
\begin{equation} \label{eq:N_scaling}
N=\omega\left(\textsf{SNR}^{(K-1)S-L+1}\right).
\end{equation}
\end{theorem}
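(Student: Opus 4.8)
The plan is to read off the theorem from Lemma~\ref{lemma:CDF_scaling} together with the rate bound (\ref{eq:data_rate_single_user_bound3}). First I would observe that the zero-forcing matrix $\mathbf{V}_i$ in (\ref{eq:ZF_BF}), and hence the normalization factors $\gamma^{[i,j]}$ and the quantities $\|\mathbf{v}^{[i,s]}\|^2$ (in particular $\|\mathbf{v}_i^{(\max)}\|^2$), is a deterministic function of $\{\mathbf{H}_i^{[i,j]},\mathbf{P}_i,\mathbf{u}^{[i,j]}\}_{j=1}^{S}$ only; it does not depend on $\mathsf{SNR}$. Moreover, since the receive vectors $\mathbf{u}^{[i,j]}$ and the set of users selected in cell $i$ are determined solely by the cross-cell matrices $\mathbf{G}^{[i,j]}$ in (\ref{eq:G_def}), conditioning on the selection leaves the home-cell channels $\mathbf{H}_i^{[i,1]},\dots,\mathbf{H}_i^{[i,S]}$ i.i.d.\ $\mathcal{CN}(0,1)$ and independent of $\{\mathbf{u}^{[i,j]}\}$. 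Consequently ${\mathbf{f}_i^{[i,1]}}^{\mathsf{H}},\dots,{\mathbf{f}_i^{[i,S]}}^{\mathsf{H}}$ are $S$ independent, absolutely continuously distributed rows in $\mathbb{C}^{1\times S}$, so the matrix inverted in (\ref{eq:ZF_BF}) is a.s.\ nonsingular, $0<\gamma^{[i,j]}<\infty$ and $0<\|\mathbf{v}_i^{(\max)}\|^2<\infty$ a.s., and — crucially — their laws do not depend on $\mathsf{SNR}$.

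Next I would invoke Lemma~\ref{lemma:CDF_scaling}: under $N=\omega(\mathsf{SNR}^{(K-1)S-L+1})$, $\textrm{Pr}\{\sum_{i=1}^{K}\sum_{j=1}^{S}I^{[i,j]}\le\epsilon\}\to 1$ for every fixed $\epsilon\in(0,\infty)$, hence also $\textrm{Pr}\{I^{[i,j]}\le\epsilon\ \forall i,j\}\to 1$. On this event, substituting $I^{[i,j]}\le\epsilon$ into (\ref{eq:data_rate_single_user_bound3}) gives
\begin{align}
R^{[i,j]}\ \ge\ \log_2\mathsf{SNR}+\log_2\!\left(\frac{1}{\mathsf{SNR}}+\frac{\gamma^{[i,j]}/\|\mathbf{v}_i^{(\max)}\|^2}{S/\|\mathbf{v}_i^{(\max)}\|^2+\epsilon}\right).
\end{align}
Since $\gamma^{[i,j]}$ and $\|\mathbf{v}_i^{(\max)}\|^2$ have $\mathsf{SNR}$-independent laws that are a.s.\ positive and finite, for any $\delta>0$ one has $\gamma^{[i,j]}\ge\mathsf{SNR}^{-\delta}$ and $\|\mathbf{v}_i^{(\max)}\|^2\le\mathsf{SNR}^{\delta}$ with probability tending to one, so the displayed logarithm is $-o(\log_2\mathsf{SNR})$ and $R^{[i,j]}\ge(1-o(1))\log_2\mathsf{SNR}$ for all $i,j$ simultaneously on a sequence of events of probability approaching one. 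Summing over the $KS$ selected streams yields a lower bound of $KS$ on the achievable DoF. For the matching upper bound I would use the genie bound obtained by deleting the inter-cell interference from the denominator of (\ref{eq:data_rate_single_user}), namely $R^{[i,j]}\le\log_2(1+\gamma^{[i,j]}\mathsf{SNR}/S)$, which gives at most one DoF per stream and hence at most $KS$ in total; combining the two bounds proves that the scheme achieves exactly $KS$ DoF with high probability, and $KS$ is optimal for the given number $S$ of streams per cell.

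I expect the main obstacle to be the bookkeeping forced by the fact that the number of users $N$ — and therefore the user selection and all the per-user quantities — varies with $\mathsf{SNR}$: one has to verify that scheduling by $\eta^{[i,j]}$ does not alter the conditional distribution of the selected users' home-cell channels (they stay i.i.d.\ $\mathcal{CN}(0,1)$, independent of $\mathsf{SNR}$ and of $N$), so that $\gamma^{[i,j]}$ and $\|\mathbf{v}_i^{(\max)}\|^2$ admit $\mathsf{SNR}$-uniform bounds away from $0$ and $\infty$, and that the probability-one invertibility events and the high-probability event of Lemma~\ref{lemma:CDF_scaling} can be intersected without loss. Once this decoupling between home-cell and cross-cell randomness is made precise, the rest is the elementary limit computation above; the heavy probabilistic estimate — the scaling of ${\sigma_L^{[i,j]}}^2$ and its order statistics over $N$ users — has already been packaged into Lemma~\ref{lemma:CDF_scaling} via \cite{H_Yang13_TWC}.
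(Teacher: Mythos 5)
Your proposal is correct and follows essentially the same route as the paper's proof: invoke Lemma~\ref{lemma:CDF_scaling} for the high-probability event that the sum-interference stays below $\epsilon$, substitute into the lower bound (\ref{eq:data_rate_single_user_bound3}), and conclude that each of the $KS$ streams contributes one DoF. The extra steps you supply — the independence of $\{\gamma^{[i,j]},\|\mathbf{v}_i^{(\max)}\|^2\}$ from $\mathsf{SNR}$ (via the decoupling of home-cell channels from the $\mathbf{G}^{[i,j]}$-based selection) and the genie converse — are details the paper leaves implicit rather than a different argument.
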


\begin{proof}
If the sum-interference remains constant with probability
$\mathcal{P}$ as defined in (\ref{eq:P_def}), the achievable rate in
(\ref{eq:data_rate_single_user_bound3}) can be further bounded by
\begin{align}
R^{[i,j]} & \ge\left[ \log_2\left( \mathsf{SNR}\right) + \log_2
\left(\frac{1}{\mathsf{SNR}}+ \frac{ \frac{\gamma^{[i,j]}}{\left\|
\mathbf{v}^{(\max)}_{i}\right\|^2}}{ \frac{S}{\left\|
\mathbf{v}^{(\max)}_{i}\right\|^2}+ \epsilon } \right) \right] \cdot
\mathcal{P}, \label{eq:data_rate_single_user_bound5}
\end{align}
for any $0\le \epsilon < \infty$. Thus, the achievable DoF can be
bounded by
\begin{equation} \label{eq:DoF_SVD_LB}
\textrm{DoF} = \lim_{\textsf{SNR} \rightarrow \infty}
\frac{\sum_{i=1}^{K}\sum_{j=1}^{S}R^{[i,j]}}{\log \textsf{SNR}} \ge
\lim_{\textsf{SNR} \rightarrow \infty}
 KS \cdot \mathcal{P}.
\end{equation}
From Lemma \ref{lemma:CDF_scaling}, it is immediate to show that $\mathcal{P}$ tends to 1, and hence $KS$ DoF is achievable if $N = \omega\left( \mathsf{SNR}^{(K-1)S-L+1}\right)$, which proves the theorem. %
%
%
\end{proof}

Compared to the previous results of $N=\omega\left(
\mathsf{SNR}^{KS-L}\right)$ \cite{W_Shin12_IEICE,H_Nguyen13_arXiv,
J_Lee13_arXiv}, the exponent of SNR is reduced by $S-1$ using the
proposed ODIA, owing to perfect CSI of the selected users at each
BS, resulting in a slightly increased overhead. The essence of the
ODIA is that the design of the precoder $\mathbf{V}_i$ can be
decoupled from the design of the receive beamforming vector
$\mathbf{u}^{[i,j]}$, because the scheduling metric $\eta^{[i,j]}$
is calculated at the user side in a distributed fashion without the
knowledge of $\mathbf{V}_i$. Even with this decoupled approach,
interference can still be minimized due to the cascaded precoder
design. As a result, optimal DoF can be achieved without any
iterative precoder and receive beamforming vector optimization as
done in \cite{C_Suh11_TC}. In addition, the proposed ODIA applies to
arbitrary $M$, $L$, and $K$, whereas the optimal DoF is achievable
only in a few special cases in the scheme proposed in
\cite{C_Suh11_TC}.

The following remark discusses the uplink and downlink duality
within the OIA framework.
\begin{remark}[Uplink-downlink duality] \label{remark:up_down_duality}
The same scaling condition of $N=\omega\left(
\mathsf{SNR}^{K(S-1)-L+1}\right)$ was achieved to obtain $KS$ DoF in
the uplink interference channel \cite{H_Yang13_TWC}. Therefore,
Theorem \ref{theorem:DoF} implies that a duality holds true for the
uplink and downlink OIA frameworks in terms of the user scaling law.
\end{remark}

The user scaling law characterizes the trade-off between the
asymptotic DoF and number of users, i.e., the more number of users,
the more achievable DoF. In addition, we relate the derived user
scaling law to the interference decaying rate with respect to $N$
for given SNR. We start with the following lemma.
\begin{lemma}\label{lemma:CDF_decay}
The interference decaying rate of a selected $j$th user in the
$i$-th cell with respect to $N$ is given by
\begin{equation}
E\left\{\frac{1}{\eta^{[i,j]}} \right\} \ge \Theta\left(
N^{1/((K-1)S-L+1)}\right).
\end{equation}
Here, $f(x) = \Theta(g(x))$ if $f(x) = O(g(x))$ and $g(x) =
O(f(x))$.
\end{lemma}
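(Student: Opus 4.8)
The plan is to convert the known near-origin behavior of the scheduling-metric CDF into a lower bound on $E\{1/\eta^{[i,j]}\}$ through an order-statistic and truncation argument. Write $q\triangleq(K-1)S-L+1\ge 1$, the inequality following from the standing assumption $L<(K-1)S+1$. By \eqref{eq:LIF_beamforming_simple}, $\eta^{[i,j]}={\sigma_L^{[i,j]}}^2$ is the smallest eigenvalue of the Gram matrix ${\mathbf{G}^{[i,j]}}^{\mathsf{H}}\mathbf{G}^{[i,j]}$, i.e.\ of an $L$-dimensional complex Wishart matrix with $(K-1)S$ degrees of freedom, since the entries of $\mathbf{G}^{[i,j]}$ are i.i.d.\ $\mathcal{CN}(0,1)$. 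As already used in the proof of Lemma~\ref{lemma:CDF_scaling}, this metric has the same distribution as the one analyzed for the MIMO IMAC in \cite{H_Yang13_TWC}; in particular its CDF $F(x)\triangleq\textrm{Pr}\{\eta^{[i,j]}\le x\}$ obeys $F(x)\ge c_1 x^{q}$ for all $x\in[0,x_0]$, for some constants $c_1>0$ and $x_0>0$ --- the exponent being precisely the $q$ appearing in Lemma~\ref{lemma:CDF_scaling}. The first step is to record this lower-tail estimate on the smallest Wishart eigenvalue explicitly.

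Next, since each BS retains the $S$ users with the smallest metrics, every selected user $j\in\mathcal{S}$ satisfies $\eta^{[i,j]}\le\eta_{(S)}$, where $\eta_{(S)}$ denotes the $S$-th smallest among the $N$ i.i.d.\ per-cell metrics; hence $1/\eta^{[i,j]}\ge 1/\eta_{(S)}$ and it suffices to lower bound $E\{1/\eta_{(S)}\}$. The crux is to fix a constant $c>0$ large enough that $c_1 c^{q}\ge\max\{2S,\,8\ln 2\}$ and to show that
\[
\textrm{Pr}\left\{\eta_{(S)}\le c N^{-1/q}\right\}\ge\tfrac12
\]
for all $N$ with $cN^{-1/q}\le x_0$. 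This holds because the number of the $N$ metrics falling below the threshold $cN^{-1/q}$ is Binomial with mean $N\,F(cN^{-1/q})\ge c_1 c^{q}$, so a Chernoff lower-tail bound makes the probability that at least $S$ of them lie below $cN^{-1/q}$ at least $1-e^{-c_1 c^{q}/8}\ge 1/2$, uniformly in $N$.

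Finally I would truncate the expectation to this event:
\[
E\left\{\frac{1}{\eta^{[i,j]}}\right\}\ge E\left\{\frac{1}{\eta_{(S)}}\right\}\ge \frac{1}{cN^{-1/q}}\,\textrm{Pr}\left\{\eta_{(S)}\le cN^{-1/q}\right\}\ge \frac{1}{2c}\,N^{1/q},
\]
which is $\Theta\big(N^{1/((K-1)S-L+1)}\big)$, establishing the lemma. The step I expect to be the main obstacle is the first one: pinning down the constant-order lower bound $F(x)\ge c_1 x^{q}$ on the lower tail of the smallest eigenvalue of the complex Wishart matrix ${\mathbf{G}^{[i,j]}}^{\mathsf{H}}\mathbf{G}^{[i,j]}$ and confirming it is valid uniformly on a fixed interval $[0,x_0]$ --- this is the only place where detailed random-matrix input (borrowed from \cite{H_Yang13_TWC}) is needed; given it, the order-statistic and truncation arguments are routine. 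If the matching $O(N^{1/q})$ bound were also wanted, one would combine the complementary upper bound $F(x)\le c_2 x^{q}$ with $\textrm{Pr}\{\eta_{(S)}\le x\}\le\binom{N}{S}F(x)^{S}$ and $q\ge1$; but as stated the lemma asserts only the lower bound.
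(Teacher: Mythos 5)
Your proof is correct and follows essentially the same route as the paper's: both rest on the lower-tail CDF scaling $F_{\eta}(x)=c_0x^{(K-1)S-L+1}+o\left(x^{(K-1)S-L+1}\right)$ imported from \cite{H_Yang13_TWC}, show that a selected user's metric falls below a threshold of order $N^{-1/((K-1)S-L+1)}$ with probability bounded away from zero, and finish with the truncation inequality $E\{1/\eta\}\ge t\cdot\textrm{Pr}\{1/\eta\ge t\}$. The only divergence is the intermediate order-statistic step --- the paper splits the $N$ users into $S$ subgroups and takes a per-subgroup minimum so that the quantile $\beta$ defined by $F_{\eta}(1/\beta)=S/N$ is exceeded with probability $1-(1-S/N)^{N/S}\to 1-e^{-1}$, whereas you bound $\textrm{Pr}\left\{\eta_{(S)}\le cN^{-1/q}\right\}$ directly via a Chernoff bound on the binomial count of metrics below the threshold --- a self-contained variant of the same idea that, if anything, makes the paper's informal ``worse performance case'' reduction explicit.
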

\begin{proof}
The lemma can be shown by following the footsteps of the proof of
\cite[Theorem 3]{J_Jose12_Allerton}. The detailed proof is provided
in Appendix \ref{app:interf_decay}.
\end{proof}

\begin{theorem}[Interference decaying rate] \label{theorem:scaling_decay}
If the user scaling condition to achieve a target DoF is given by $N
= \omega \left( \textsf{SNR}^{\tau'}\right)$ for some $\tau'>0$,
then the interference decaying rate is given by
\begin{align}
 E\left\{\frac{1}{\eta^{[i,j]}} \right\} \ge \Theta\left( N^{1/\tau'}\right).
\end{align}
\end{theorem}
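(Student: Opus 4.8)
\emph{Proof plan.} The plan is to reduce the claim to Lemma~\ref{lemma:CDF_decay} by observing that the exponent $\tau'$ that appears in any such user scaling law is exactly the order at which the CDF of the scheduling metric $\eta^{[i,j]}$ vanishes at the origin. Write $F_\eta(x)\triangleq\textrm{Pr}\{\eta^{[i,j]}\le x\}$ for the common CDF of a single user's scheduling metric. The analysis behind Lemma~\ref{lemma:CDF_scaling} (equivalently, Lemma~1 and Theorem~2 of~\cite{H_Yang13_TWC}) uses only the fact that $F_\eta(x)=\Theta(x^{\tau'})$ as $x\to0^{+}$: in that case the $S$-th smallest of $N$ i.i.d.\ copies of $\eta^{[i,j]}$ is of order $N^{-1/\tau'}$, so a selected user has $I^{[i,j]}=S\,\eta^{[i,j]}\mathsf{SNR}=\Theta\!\left(\mathsf{SNR}\cdot N^{-1/\tau'}\right)$, which stays bounded with high probability precisely when $N=\omega\!\left(\mathsf{SNR}^{\tau'}\right)$. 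Hence the hypothesis that the scaling condition to achieve the target DoF is $N=\omega(\mathsf{SNR}^{\tau'})$ is equivalent to $F_\eta(x)=\Theta(x^{\tau'})$ near zero, and this single property is all I would carry into the remainder of the argument.

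Given that, I would re-run the proof of Lemma~\ref{lemma:CDF_decay} (i.e.\ the proof of~\cite[Theorem~3]{J_Jose12_Allerton} reproduced in Appendix~\ref{app:interf_decay}), replacing the exponent $(K-1)S-L+1$ by $\tau'$ throughout. That argument lower-bounds $E\{1/\eta^{[i,j]}\}$ by integrating the tail $\textrm{Pr}\{1/\eta^{[i,j]}\ge t\}=\textrm{Pr}\{\eta^{[i,j]}\le 1/t\}$ of the selected order-statistic metric, which behaves like $1-\left(1-F_\eta(1/t)\right)^{N-S+1}\asymp\min\{1,\,N t^{-\tau'}\}$ for large $t$; integrating in $t$ then yields $E\{1/\eta^{[i,j]}\}\ge\Theta\!\left(N^{1/\tau'}\right)$. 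Since nothing in that computation uses the particular value $(K-1)S-L+1$ other than its being the power of $x$ in $F_\eta$, the generalization is immediate once the first paragraph is in place.

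As a sanity check I would also run the coarse direct route: pick $\mathsf{SNR}=N^{1/\tau'-\delta}$ for an arbitrarily small $\delta>0$, which satisfies $N=\omega(\mathsf{SNR}^{\tau'})$, so the DoF-achievability analysis gives $\textrm{Pr}\{I^{[i,j]}\le\epsilon\}\to1$, i.e.\ $\textrm{Pr}\{\eta^{[i,j]}\le\epsilon/(S\,\mathsf{SNR})\}\to1$, whence
\begin{equation*}
E\left\{\frac{1}{\eta^{[i,j]}}\right\}\ \ge\ \frac{S\,\mathsf{SNR}}{\epsilon}\,\textrm{Pr}\!\left\{\eta^{[i,j]}\le\frac{\epsilon}{S\,\mathsf{SNR}}\right\}\ =\ \Omega\!\left(N^{1/\tau'-\delta}\right).
\end{equation*}
Letting $\delta\to0$ recovers the exponent $1/\tau'$. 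I expect the main obstacle to be precisely this last point: closing the $\delta$ (equivalently, the $\epsilon$) slack so that the bound carries the sharp exponent $1/\tau'$ rather than $1/\tau'-\delta$. Doing this cleanly forces one back to the first approach --- the exact $\Theta(x^{\tau'})$ behavior of $F_\eta$ near the origin and the corresponding order-statistic estimates --- and it also requires the converse side of the analysis of~\cite{H_Yang13_TWC,J_Jose12_Allerton}, namely that $N=\omega(\mathsf{SNR}^{\tau'})$ being the \emph{minimal} sufficient scaling genuinely forces $\tau'$ to coincide with the CDF exponent.
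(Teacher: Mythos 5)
Your proposal is correct and follows essentially the same route as the paper: the paper's proof is precisely the observation that both the user scaling law and the interference decaying rate are governed by the tail exponent of the CDF of $\eta^{[i,j]}$ near the origin, so the argument of Lemma~\ref{lemma:CDF_decay} carries over with $(K-1)S-L+1$ replaced by $\tau'$. Your additional remark that the sharp exponent $1/\tau'$ (rather than $1/\tau'-\delta$) requires identifying $\tau'$ with the exact CDF exponent, not merely a sufficient scaling, is a fair and slightly more careful reading of what the paper leaves implicit.
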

\begin{proof}
Since both the user scaling law and interference decaying rate are
determined by the tail CDF of the scheduling metric, the theorem can
be readily proved by the proofs of Theorem \ref{theorem:DoF} and
Lemma \ref{lemma:CDF_decay}.
\end{proof}

%

Therefore, from Theorem \ref{theorem:scaling_decay}, the user
scaling law also provides an insight on the interference decaying
rate with respect to $N$ for given SNR; that is, the smaller SNR
exponent of the user scaling law, the faster interference decreasing
rate with respect to $N$.

\vspace{20pt}
%
%
%
%

\section{ODIA with Limited feedback} \label{SEC:OIA_limited}
In the proposed ODIA scheme, the effective channel vectors
(${\mathbf{u}^{[i,j]}}^{\mathsf{H}}\mathbf{H}^{[i,j]}_{i}\mathbf{P}_i$)
in (\ref{eq:effective_CH}) can be fed back to the corresponding BS
using pilots rotated by the effective channels \cite{L_Choi04_TWC}.
However, this analog feedback requires two consecutive pilot phases
for each user: regular pilot for uplink channel estimation and
analog feedback for effective channel estimation. Hence, pilot
overhead grows with respect to the number of users in the network.
As a result, in practical systems with massive users, it is more
preferable to follow the widely-used limited feedback approach
\cite{D_Love03_TIT}, in which effective channels are fed back using
codebooks.

For limited feedback of effective channel vectors, we define the
codebook by
\begin{equation}
\mathcal{C}_f = \left\{ \mathbf{c}_{1}, \ldots,
\mathbf{c}_{N_f}\right\},
\end{equation}
where $N_f$ is the codebook size and $\mathbf{c}_k\in \mathbb{C}^{S
\times 1}$ is a unit-norm codeword, i.e., $\|\mathbf{c}_i\|^2=1$.
Hence, the number of feedback bits used is given by
\begin{equation}
n_f = \lceil\log_2 N_f \rceil  (\textrm{bits})
\end{equation}
For the effective channel ${\mathbf{f}_{i}^{[i,j]}}^{\mathsf{H}} =
{\mathbf{u}^{[i,j]}}^{\mathsf{H}} \mathbf{H}^{[i,j]}_{i}
\mathbf{P}_i,$ each user quantizes the normalized effective channel
for given $\mathcal{C}_f$ from
\begin{align}
\mathbf{f}_{i}^{[i,j]} = \arg \max_{ \{\mathbf{w} = \mathbf{c}_k:
1\le k \le N_f\}} \frac{\left|
{\mathbf{f}_{i}^{[i,j]}}^{\mathsf{H}}\mathbf{w}\right|^2}{\left\|
\mathbf{f}_{i}^{[i,j]}\right\|^2}.
\end{align}
Now, the user feeds back three types of information: 1) index of
$\mathbf{f}_{i}^{[i,j]}$, 2) channel gain of $\left\|
\mathbf{f}_{i}^{[i,j]}\right\|^2$, and 3) scheduling metric
$\eta^{[i,j]}$. Note that the channel gains and scheduling metrics
are real scalar values, and thus can be accurately fed back as
uplink data. Then, BS $i$ constructs the quantized effective channel
vectors $\hat{\mathbf{f}}^{[i,j]}$ from
\begin{align}
\hat{\mathbf{f}}^{[i,j]} \triangleq \left\|
\mathbf{f}_{i}^{[i,j]}\right\|^2 \cdot \mathbf{f}_{i}^{[i,j]},
\hspace{10pt}i=1, \ldots, S,
\end{align}
and the precoding matrix $\hat{\mathbf{V}}_i$ from
\begin{align} \label{eq:V_hat}
\hat{\mathbf{V}}_i = \hat{\mathbf{F}}_i^{-1} \boldsymbol{\Gamma}_i,
\end{align}
where $\boldsymbol{\Gamma}_i = \textrm{diag} \left(
\sqrt{\gamma^{[i,1]}}, \ldots, \sqrt{\gamma^{[i,S]}}\right)$ and
$\hat{\mathbf{F}}_i = \left[ \hat{\mathbf{f}}^{[i,1]}, \ldots,
\hat{\mathbf{f}}^{[i,S]}\right]^{\mathsf{H}}$.
%

With limited feedback, the received signal vector after receive
beamforming is written by
\begin{align}\label{eq:rec_vector_after_BF_limited}
\tilde{y}^{[i,j]} &= {\mathbf{f}_{i}^{[i,j]}}^{\mathsf{H}}\hat{\mathbf{V}}_i \mathbf{x}_i + \cdot \sum_{k=1, k\neq i}^{K} {\mathbf{f}_{k}^{[i,j]}}^{\mathsf{H}} \hat{\mathbf{V}}_k \mathbf{x}_k \nonumber \\
& \hspace{50pt}+ {\mathbf{u}^{[i,j]}}^{\mathsf{H}} \mathbf{z}^{[i,j]}  \\
& = \sqrt{\gamma^{[i,j]}}x^{[i,j]} + \underbrace{\left( {\mathbf{f}_{i}^{[i,j]}}^{\mathsf{H}}\hat{\mathbf{V}}_i \mathbf{x}_i- \sqrt{\gamma^{[i,j]}}x^{[i,j]}\right)}_{\textrm{residual intra-cell interference}} \nonumber \\
& \hspace{20pt}+ \sum_{k=1, k\neq i}^{K}
{\mathbf{f}_{k}^{[i,j]}}^{\mathsf{H}} \hat{\mathbf{V}}_k
\mathbf{x}_k + {\mathbf{u}^{[i,j]}}^{\mathsf{H}} \mathbf{z}^{[i,j]},
\end{align}
where the residual intra-cell interference is non-zero due to the
quantization error in $\hat{\mathbf{V}}_i$.

It is important to note that the residual intra-cell interference is
a function of $\hat{\mathbf{V}}_i$, which includes other users'
channel information, and thus each user treats this term as
unpredictable noise and calculates only the inter-cell interference
for the scheduling metric as in (\ref{eq:eta}); that is, the
scheduling metric is not changed for the ODIA with limited feedback.

The following theorem establishes the user scaling law for the ODIA
with limited feedback.
\begin{theorem} \label{th:codebook}
The ODIA with a Grassmannian or random codebook achieves the same
user scaling law of the ODIA with perfect CSI described in Theorem
\ref{theorem:DoF}, if
\begin{equation} \label{eq:nf_cond0}
n_f =\omega\left( \log_2 \mathsf{SNR} \right).
\end{equation}
That is, $KS$ DoF is achievable with high probability if $N=\omega
\left( \mathsf{SNR}^{(K-1)S-L+1}\right)$ and (\ref{eq:nf_cond0})
holds true.
\end{theorem}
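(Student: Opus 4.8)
The plan is to show that the per-user residual intra-cell interference introduced by codebook quantization vanishes (in the DoF sense) provided $n_f$ grows fast enough, so that the achievable rate expression reduces asymptotically to the perfect-CSI case analyzed in Theorem~\ref{theorem:DoF}. First I would quantify the quantization error of the normalized effective channel $\mathbf{f}_i^{[i,j]}$. With a Grassmannian (or random i.i.d.) codebook of size $N_f = 2^{n_f}$ on the unit sphere in $\mathbb{C}^S$, the standard result (as in \cite{N_Jindal06_TIT, D_Love03_TIT}) gives that the squared chordal distance $\sin^2\theta^{[i,j]}$ between the true direction and its quantization satisfies $\E\{\sin^2\theta^{[i,j]}\} \le c\, 2^{-n_f/(S-1)}$ for some constant $c$, and moreover concentrates: $\sin^2\theta^{[i,j]} = O(2^{-n_f/(S-1)})$ with high probability. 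Writing $\mathbf{f}_i^{[i,j]} = \|\mathbf{f}_i^{[i,j]}\|(\cos\theta^{[i,j]}\, \mathbf{f}_i^{[i,j]} + \sin\theta^{[i,j]}\, \mathbf{e}^{[i,j]})$ with $\mathbf{e}^{[i,j]}$ a unit vector orthogonal to the quantized direction, the residual intra-cell interference term in (\ref{eq:rec_vector_after_BF_limited}) is controlled by $\|\mathbf{f}_i^{[i,j]}\|^2 \sin^2\theta^{[i,j]} \cdot \|\hat{\mathbf{V}}_i\|^2 / S$.

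Next I would bound $\|\hat{\mathbf{V}}_i\|^2$ and the channel-gain factors. Since $\hat{\mathbf{V}}_i = \hat{\mathbf{F}}_i^{-1}\boldsymbol{\Gamma}_i$ and the quantized effective channels $\hat{\mathbf{f}}^{[i,j]}$ have bounded norm (the $\|\mathbf{f}_i^{[i,j]}\|^2$ are $\chi^2$-type with $L$ degrees of freedom after the receive-beamforming projection, hence $O(\log\mathsf{SNR})$ or even $O(1)$ with high probability), and since $\hat{\mathbf{F}}_i$ is a well-conditioned $S\times S$ matrix with high probability (its smallest singular value is bounded below by a constant independent of SNR because the quantized directions inherit near-isotropy from the codebook), we get $\|\hat{\mathbf{V}}_i\|^2 = O(\mathrm{poly}\log\mathsf{SNR})$ with high probability. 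Therefore the residual intra-cell interference, scaled by SNR to check the DoF condition, behaves like $\mathsf{SNR}\cdot 2^{-n_f/(S-1)}\cdot \mathrm{poly}\log\mathsf{SNR}$, which tends to zero whenever $n_f = \omega(\log_2 \mathsf{SNR})$, since then $2^{-n_f/(S-1)}$ decays faster than any fixed negative power of $\mathsf{SNR}$ and in particular beats the linear $\mathsf{SNR}$ factor (here one uses $n_f/(S-1) \gg \log_2\mathsf{SNR}$, equivalently $n_f \gg (S-1)\log_2\mathsf{SNR} = \omega(\log_2\mathsf{SNR})$). The inter-cell interference term is handled exactly as before: it is unaffected by the quantization of $\mathbf{V}_i$ (only the scheduling metric $\eta^{[i,j]}$ matters, and that is computed from true local CSI and is unchanged), so Lemma~\ref{lemma:CDF_scaling} and Theorem~\ref{theorem:DoF} apply verbatim under $N = \omega(\mathsf{SNR}^{(K-1)S-L+1})$, giving that $\sum_{i,j} I^{[i,j]} \le \epsilon$ with probability tending to one.

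Combining the two pieces: with probability approaching one, each $R^{[i,j]}$ is lower bounded by $\log_2\mathsf{SNR} + \log_2\big(\tfrac{1}{\mathsf{SNR}} + \tfrac{\gamma^{[i,j]}/\|\mathbf{v}^{(\max)}_i\|^2}{S/\|\mathbf{v}^{(\max)}_i\|^2 + \epsilon + \epsilon'}\big)$, where $\epsilon'$ absorbs the vanishing residual intra-cell interference; dividing by $\log\mathsf{SNR}$ and taking SNR to infinity yields total DoF $\ge KS\cdot\mathcal{P} \to KS$, which is optimal. The main obstacle I anticipate is the control of $\|\hat{\mathbf{V}}_i\|^2$ — specifically, showing the quantized effective-channel matrix $\hat{\mathbf{F}}_i$ is well-conditioned with high enough probability that the bad event does not interfere with the high-probability claim; this requires a careful argument that the $S$ quantized directions do not collapse onto a lower-dimensional subspace, which follows from near-isotropy of the codebook cells but needs to be made quantitative (e.g., a union bound over the event that any pair of quantized directions is nearly parallel, combined with the bounded-norm event). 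One must also be slightly careful that the per-user error bounds hold \emph{jointly} over all $KS$ selected users, but since $K$ and $S$ are fixed constants a simple union bound suffices.
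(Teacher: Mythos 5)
Your proposal is correct and follows essentially the same route as the paper: decompose the quantized effective channel into the true direction plus an orthogonal error of chordal distance $d$, bound $d^2$ by $N_f^{-1/(S-1)}$ for both Grassmannian (sphere-packing bounds) and random (order-statistics of the quantization CDF) codebooks, observe that the residual intra-cell interference scales as $d^2\cdot\mathsf{SNR}$ times channel-dependent constants and therefore stays bounded iff $N_f=\omega(\mathsf{SNR}^{S-1})$, i.e., $n_f=\omega(\log_2\mathsf{SNR})$, while the inter-cell term is untouched by quantization and is handled by Lemma~\ref{lemma:CDF_scaling} exactly as in Theorem~\ref{theorem:DoF}. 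The only cosmetic difference is that the paper controls the perturbed precoder $\hat{\mathbf{V}}_i$ via a Taylor expansion of $(\sqrt{1-{d^{\max}_i}^2}\mathbf{F}_i+d^{\max}_i\nu_i\mathbf{T}_i)^{-1}$ about $\mathbf{F}_i^{-1}$ (treating the channel-dependent coefficients as fixed for a given realization) rather than your direct conditioning bound on $\hat{\mathbf{F}}_i$, but both serve the same purpose.
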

\begin{proof}
Without loss of generality, the quantized effective channel vector
$\hat{\mathbf{f}}^{[i,j]}$ can be decomposed as
\begin{align} \label{eq:f_hat2}
\hat{\mathbf{f}}^{[i,j]} &= \left\| \mathbf{f}_{i}^{[i,j]}\right\|^2 \cdot \mathbf{f}_{i}^{[i,j]} \nonumber \\
& =  \sqrt{1-{d^{[i,j]}}^2}\cdot\mathbf{f}_{i}^{[i,j]}+
d^{[i,j]}\left\|\mathbf{f}_{i}^{[i,j]}\right\|^2
\left(\mathbf{t}^{[i,j]}\right),
\end{align}
where $\mathbf{t}^{[i,j]}$ is a unit-norm vector i.i.d. over
$\textrm{null}\left( \mathbf{f}_{i}^{[i,j]}\right)$
\cite{N_Jindal06_TIT}. At this point, we consider the worse
performance case where each user finds $\hat{\mathbf{f}}^{[i,j]}$
such that with a slight abuse of notation
\begin{align} \label{eq:f_hat3}
\hat{\mathbf{f}}^{[i,j]} =
\sqrt{1-{d^{\max}_i}^2}\cdot\mathbf{f}_{i}^{[i,j]}+ d^{\max}_i\nu_i
\cdot\mathbf{t}^{[i,j]},
\end{align}
where
\begin{equation}
d^{\max}_i = \max \left\{d^{[i,1]}, \ldots, d^{[i,S]} \right\},
\end{equation}
\begin{equation}
\nu_i = \max \left\{ \left\|\mathbf{f}_{i}^{[i,j]}\right\|^2, j=1,
\ldots, S \right\}.
\end{equation}
Note that more quantization error only degrades the achievable rate,
and hence the quantization via (\ref{eq:f_hat3}) yields a
performance lower-bound. Inserting (\ref{eq:f_hat3}) to
(\ref{eq:V_hat}) gives us
\begin{align} \label{eq:V_hat2}
\hat{\mathbf{V}}_i = \left( \sqrt{1- {d^{\max}_i}^2}\mathbf{F}_i +
d^{\max}_i\nu_i\mathbf{T}_i \right)^{-1} \boldsymbol{\Gamma}_i,
\end{align}
where ${\mathbf{F}}_i = \left[ \mathbf{f}_{i}^{[i,1]}, \ldots,
\mathbf{f}_{i}^{[i,S]}\right]^{\mathsf{H}}$ and ${\mathbf{T}}_i =
\left[ {\mathbf{t}}^{[i,1]}, \ldots,
{\mathbf{t}}^{[i,S]}\right]^{\mathsf{H}}$.

The Taylor expansion of $\left( \sqrt{1-{d^{\max}_i}^2}\mathbf{F}_i
+ d^{\max}_i\nu_i\mathbf{T} \right)^{-1}$ in (\ref{eq:V_hat}) gives
us
\begin{align} \label{eq:Taylor}
&\left( \sqrt{1-{d^{\max}_i}^2}\mathbf{F}_i + d^{\max}_i\nu_i\mathbf{T}_i \right)^{-1} \nonumber \\
& \hspace{20pt}= \mathbf{F}_i^{-1} - \mathbf{F}_i^{-1}\mathbf{T}_i
\mathbf{F}_i^{-1} \nu_id^{\max}_i +\sum_{k=2}^{\infty} \mathbf{A}_k
\left(d^{\max}_i\right)^k,
\end{align}
where $\mathbf{A}_k$ is a function of $\mathbf{F}_i$ and
$\mathbf{T}_i$. Thus, $\hat{\mathbf{V}}_i$ can be written by
\begin{align} \label{eq:V_hat_final}
\hat{\mathbf{V}}_i = \mathbf{F}_i^{-1}\boldsymbol{\Gamma}_i -
d^{\max}_i\nu_i\mathbf{F}_i^{-1}\mathbf{T}_i
\mathbf{F}_i^{-1}\boldsymbol{\Gamma}_i +\sum_{k=2}^{\infty}
\left(d^{\max}_i\right)^k\mathbf{A}_k\boldsymbol{\Gamma}_i
\end{align}

Inserting (\ref{eq:V_hat_final}) to
(\ref{eq:rec_vector_after_BF_limited}) yields
\begin{align}\label{eq:rec_vector_after_BF_limited3}
\tilde{y}^{[i,j]} &= \sqrt{\gamma^{[i,j]}}x^{[i,j]} \nonumber \\
& \underbrace{-d^{\max}_i\nu_i{\mathbf{t}^{[i,j]}}^{\textsf{H}}\mathbf{F}_i^{-1}\boldsymbol{\Gamma}_i \mathbf{x}_i + \sum_{k=2}^{\infty}  \left(d^{\max}_i\right)^k{\mathbf{f}_{i}^{[i,j]}}^{\mathsf{H}}\mathbf{A}_k \boldsymbol{\Gamma}_i \mathbf{x}_i }_{\textrm{residual intra-cell interference}} \nonumber \\
& \hspace{0pt}+ \sum_{k=1, k\neq i}^{K}
{\mathbf{f}_{k}^{[i,j]}}^{\mathsf{H}} \hat{\mathbf{V}}_k
\mathbf{x}_k + {\mathbf{u}^{[i,j]}}^{\mathsf{H}} \mathbf{z}^{[i,j]}.
\end{align}
Consequently,  the rate $R^{[i,j]}$ in
(\ref{eq:data_rate_single_user}) is given by
\begin{align} \label{eq:data_rate_single_user3}
R^{[i,j]}= \log_2 \left( 1+ \frac{ \gamma^{[i,j]} }{ \frac{S+
\Delta^{[i,j]}}{\mathsf{SNR}}+ \sum_{k\neq i}^{K} \sum_{s=1}^{S}
\left| {\mathbf{f}_{k}^{[i,j]}}^{\mathsf{H}}
\mathbf{v}^{[k,s]}\right|^2 } \right),
\end{align}
where
\begin{align} \label{eq:Delta}
\Delta^{[i,j]} = \left(d^{\max}_i\right)^2\delta_1\cdot \mathsf{SNR}
+ \sum_{k=2}^{\infty}  \left(d^{\max}_i\right)^{2k}\delta_k\cdot
\mathsf{SNR},
\end{align}
\begin{equation}
\delta_1 =
\left(\nu_i^2{\mathbf{t}^{[i,j]}}^{\textsf{H}}\mathbf{F}_i^{-1}{\boldsymbol{\Gamma}_i}^2\mathbf{F}_i^{-\mathsf{H}}\mathbf{t}^{[i,j]}\right),
\end{equation}
\begin{equation}
\delta_k = \left({\mathbf{f}_{i}^{[i,j]}}^{\mathsf{H}}\mathbf{A}_k
\boldsymbol{\Gamma}_i^2\mathbf{A}_k^{\mathsf{H}}
\mathbf{f}_{i}^{[i,j]}\right)
\end{equation}
As in (\ref{eq:data_rate_single_user_bound}) to
(\ref{eq:data_rate_single_user_bound3}), the achievable rate can be
bounded by
\begin{align} \label{eq:R_bound_limitedFB}
R^{[i,j]} & \ge\left[ \log_2\left( \mathsf{SNR}\right) + \log_2
\left( \frac{1}{\mathsf{SNR}}+ \frac{ \frac{\gamma^{[i,j]}}{\left\|
\mathbf{v}^{(\max)}_{i}\right\|^2}}{ \frac{1}{\left\|
\mathbf{v}^{(\max)}_{i}\right\|^2}+ 2\epsilon } \right) \right]
\cdot \mathcal{P}',
\end{align}
where
\begin{align}
\label{eq:P_def2}
\mathcal{P}'&\triangleq  \textrm{Pr} \Bigg\{\left(\sum_{i=1}^{K}\sum_{j=1}^{S} I^{[i,j]} \le \epsilon\right) \& \left(\Delta^{[i,j]}/\left\| \mathbf{v}^{(\max)}_{i}\right\|^2 \le \epsilon\right), \nonumber \\
& \hspace{80pt} \forall i\in \mathcal{K}, j\in \mathcal{S}  \Bigg\} \displaybreak[0] \\
\label{eq:P_def3}& = \textrm{Pr} \Bigg\{\sum_{i=1}^{K}\sum_{j=1}^{S} I^{[i,j]} \le \epsilon, \forall i\in \mathcal{K}, j\in \mathcal{S}  \Bigg\}\nonumber \\
& \hspace{50pt}\times \textrm{Pr} \Bigg\{\Delta^{[i,j]} \le
\epsilon', \forall i\in \mathcal{K}, j\in \mathcal{S}  \Bigg\},
\end{align}
where $\epsilon' \triangleq \epsilon\cdot\left\|
\mathbf{v}^{(\max)}_{i}\right\|^2$.
 Here, (\ref{eq:P_def3}) follows from the fact that the inter-cell interference $I^{[i,j]}$ and residual intra-cell interference  $\Delta^{[i,j]}$ are independent each other. Note also that the level of residual intra-cell interference does not affect the user selection and is determined only by the codebook size $N_f$. Hence, the user selection result does not change for different $N_f$.

The achievable DoF is given by
 \begin{align}
 \textrm{DoF} \ge \lim_{\textsf{SNR} \rightarrow \infty}
 KS \cdot \mathcal{P}'.
 \end{align}
 If $N=\omega\left(\textsf{SNR}^{(K-1)S-L+1}\right)$, the first term of (\ref{eq:P_def3}) tends to 1 according to Lemma \ref{lemma:CDF_scaling}.
Thus, the maximum DoF can be obtained if and only if $\Delta^{[i,j]}
\le \epsilon'$ for all selected users for increasing SNR.

In Appendix \ref{app:th_codebook}, it is shown that $\Delta^{[i,j]}
\le \epsilon'$ for all selected users if $n_f =\omega\left( \log_2
\mathsf{SNR}\right)$ for both Grassmannian and random codebooks.
Therefore, if $N=\omega\left(\textsf{SNR}^{(K-1)S-L+1}\right)$ and
$n_f =\omega\left( \log_2 \mathsf{SNR}\right)$, $\mathcal{P}'$ in
(\ref{eq:P_def3}) tends to 1, which proves the theorem.
\end{proof}

From Theorem \ref{th:codebook}, the minimum number of feedback bits
$n_f$ is characterized to achieve the optimal $KS$ DoF, which
increases with respect to $\log_2(\mathsf{SNR})$. It is worthwhile
to note that the results are the same for the Grassmannian and
random codebooks.  In the previous works on limited feedback
systems, the performance analysis was focused on the average SNR or
the average rate loss \cite{C_Au-Yeung09_TWC}. In an average sense,
the Grassmannian codebook is in general outperforms the random
codebook. However, our scheme focuses on the asymptotic codebook
performance for given channel instance for increasing SNR, and it
turned out that this asymptotic behaviour is the same for the two
codebooks. In fact, this result agrees with the previous works e.g.,
\cite{B_Khoshnevis11_Thesis}, in which the performance gap between
the two codebooks was shown to be negligible as $n_f$ increases
through computer simulations.

We conclude this section by providing the following comparison to
the well-known conventional result on limited feedback systems.
\begin{remark}
For the MIMO broadcast channel with limited feedback, where the
transmitter has $L$ antennas and employs the random codebook, it was
shown \cite{N_Jindal06_TIT} that the achievable rate loss for each
user, denoted by $\Delta R$, due to the finite size of the codebook
is lower-bounded by
\begin{equation}
\Delta < \log_2 \left(1+\textrm{SNR} \cdot 2^{-n_f/(L-1)} \right).
\end{equation}
Thus, to achieve the maximum 1 DoF for each user, or to make the
rate loss negligible as the SNR increases, the term $\textrm{SNR}
\cdot 2^{-n_f/(L-1)}$ should remain constant for increasing SNR.
That is, $n_f$ should scale faster than $(L-1)\log_2
(\textrm{SNR})$. Though the system is different, our results of
Theorem \ref{th:codebook} are consistent with this previous result.
\end{remark}

\section{Spectrally Efficient ODIA (SE-ODIA)} \label{SEC:Threhold_ODIA}
In this section, we propose a spectrally efficient OIA (SE-ODIA)
scheme and show that the proposed SE-ODIA achieves the optimal
multiuser diversity gain $\log \log N$. For the DoF achievability,
it was enough to design the user scheduling in the sense to minimize
inter-cell interference. However, to achieve optimal multiuser
diversity gain, the gain of desired channels also needs to be
considered in user scheduling. The overall procedure of the SE-ODIA
follows that of the ODIA described in Section \ref{SEC:OIA} except
the the third stage `User Scheduling'. In addition, we assume the
perfect feedback of the effective desired channels
${\mathbf{u}^{[i,j]}}^{\mathsf{H}}\mathbf{H}_i^{[i,j]}\mathbf{P}_i$
for the SE-ODIA. We incorporate the semiorthogonal user selection
algorithm proposed in \cite{T_Yoo06_JSAC} to the ODIA framework
taking into consideration inter-cell interference. Specifically, the
algorithm for the user scheduling at the BS side is as follows:
\begin{itemize}
\item Step 1: Initialization:
\begin{align}
\mathcal{N}_1& = \{1, \ldots, N\}, \hspace{10pt} s=1
\end{align}
\item Step 2: For each user $j\in \mathcal{N}_s$ in the $i$-th cell, the $s$-th orthogonal projection vector, denoted by $\tilde{\mathbf{b}}_{s}^{[i,j]}$, for given $\left\{ \mathbf{b}_{1}^{[i]}, \ldots, \mathbf{b}_{s-1}^{[i]} \right\}$ is calculated from:
\begin{align}
\tilde{\mathbf{b}}^{[i,j]}_s &= \mathbf{f}_{i}^{[i,j]} -
\sum_{s'=0}^{s-1} \frac{{\mathbf{b}_{s'}^{[i]}}^{\mathsf{H}}
\mathbf{f}_{i}^{[i,j]}}{\|\mathbf{b}_{s'}^{[i]}\|^2}\mathbf{b}_{s'}^{[i]}
\end{align}
Note that if $s=1$, $\tilde{\mathbf{b}}_{1}^{[i,j]} =
\mathbf{f}_{i}^{[i,j]}$.
\item Step 3: For the $s$-th user selection, a user is selected at random from the user pool $\mathcal{N}_s$ that satisfies the following two conditions:
\begin{align}
\label{eq:C}\mathsf{C}_1:& \eta^{[i,j]} \le \eta_I,
\hspace{10pt}\mathsf{C}_2: \|\tilde{\mathbf{b}}_{s}^{[i,j]}\|^2 \ge
\eta_D
\end{align}
Denote the index of the selected user by $\pi(s)$ and define
\begin{equation}
\mathbf{b}_{s}^{[i]} = \tilde{\mathbf{b}}^{[i,\pi(s)]}_s.
\end{equation}
\item Step 4: If $s < S$, then find the $(s+1)$-th user pool $\mathcal{N}_{s+1}$ from:
\begin{align}
\mathcal{N}_{s+1}& = \left\{j:j \in \mathcal{N}_{s}, j \neq \pi(s), \frac{\left|{\mathbf{f}_{i}^{[i,j]}}^{\mathsf{H}} \mathbf{b}_{s}^{[i]}\right|}{\| \mathbf{f}_{i}^{[i,j]}\| \|\mathbf{b}_{s}^{[i]}\|}  <\alpha\right\},\hspace{10pt} \\
 s &= s+1,
\end{align}
where $\alpha>0$ is a positive constant. Repeat Step 2 to Step 4
until $s=S$.
\end{itemize}

To show the SE-ODIA achieves the optimal multiuser diversity gain,
we start with the following lemma for the bound on
$|\mathcal{N}_s|$.
\begin{lemma}\label{lemma:N_card}
The cardinality of $\mathcal{N}_s$ can be bounded by
\begin{align}
|\mathcal{N}_s| & \gtrsim N \cdot \alpha^{2(S-1)}.
\end{align}
The approximated inequality becomes tight as $N$ increases.
\end{lemma}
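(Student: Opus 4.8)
The plan is to turn the statement into a counting problem for $N$ i.i.d.\ isotropic directions on the unit sphere of $\mathbb{C}^{S}$ and to track how many of them survive the Gram--Schmidt pruning. First I would fix the relevant distribution. The receive beamformer $\mathbf{u}^{[i,j]}$ in (\ref{eq:W_SVD}) is a function only of the cross-channels $\{\mathbf{H}_{k}^{[i,j]}\}_{k\neq i}$ and of the fixed matrices $\{\mathbf{P}_{k}\}_{k\neq i}$, hence is independent of $\mathbf{H}_{i}^{[i,j]}$; so conditioned on $\mathbf{u}^{[i,j]}$, the vector $\mathbf{f}_{i}^{[i,j]}=\bigl({\mathbf{u}^{[i,j]}}^{\mathsf{H}}\mathbf{H}_{i}^{[i,j]}\mathbf{P}_{i}\bigr)^{\mathsf{H}}$ has i.i.d.\ $\mathcal{CN}(0,1)$ entries, because $\|\mathbf{u}^{[i,j]}\|=1$ and $\mathbf{P}_{i}$ has orthonormal columns. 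Consequently the direction $\hat{\mathbf{f}}^{[i,j]}\triangleq\mathbf{f}_{i}^{[i,j]}/\|\mathbf{f}_{i}^{[i,j]}\|$ is isotropic on the unit sphere of $\mathbb{C}^{S}$, is independent of the gain $\|\mathbf{f}_{i}^{[i,j]}\|^{2}$, and the directions $\{\hat{\mathbf{f}}^{[i,j]}\}_{j\in\mathcal{N}_{1}}$ are mutually independent.

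The key structural observation is that $j\in\mathcal{N}_{s}$ exactly when $j\notin\{\pi(1),\dots,\pi(s-1)\}$ and $\bigl|{\hat{\mathbf{f}}^{[i,j]}}^{\mathsf{H}}\hat{\mathbf{b}}_{s'}^{[i]}\bigr|<\alpha$ for every $s'=1,\dots,s-1$, where $\mathbf{b}_{1}^{[i]},\dots,\mathbf{b}_{s-1}^{[i]}$ form an orthogonal set by the Gram--Schmidt recursion in Step~2 and $\hat{\mathbf{b}}_{s'}^{[i]}$ denotes its normalization; the conditions $\mathsf{C}_{1},\mathsf{C}_{2}$ only govern \emph{which} admissible user is selected, so once the indices $\pi(1),\dots,\pi(s-1)$ and the derived orthonormal set $B=\{\hat{\mathbf{b}}_{1},\dots,\hat{\mathbf{b}}_{s-1}\}$ are fixed, membership in $\mathcal{N}_{s}$ depends on the channels only through the directions $\hat{\mathbf{f}}^{[i,j]}$. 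For any fixed orthonormal $B$, an isotropic direction survives the pruning with probability
\[
p_{s}=\Pr\bigl\{|c_{1}|^{2}<\alpha^{2},\dots,|c_{s-1}|^{2}<\alpha^{2}\bigr\},
\]
where $(|c_{1}|^{2},\dots,|c_{S}|^{2})$ is uniform on the probability simplex, these being the squared moduli of an isotropic unit vector of $\mathbb{C}^{S}$ in any orthonormal basis; by rotational invariance $p_{s}$ is the same for every $B$, and an elementary simplex-volume integral gives $p_{s}=\frac{(S-1)!}{(S-s)!}\,\alpha^{2(s-1)}\bigl(1+o(1)\bigr)$ as $\alpha\to0$, so $p_{s}\ge c(S)\,\alpha^{2(S-1)}$ because $s-1\le S-1$ and $\alpha\le1$. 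Hence the typical size of $\mathcal{N}_{s}$ is $Np_{s}\gtrsim N\alpha^{2(S-1)}$, and a concentration argument promotes this to a high-probability lower bound that tightens as $N$ grows.

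The step I expect to be the real obstacle is making that concentration rigorous in the presence of the coupling the scheduler introduces: $\pi(1),\dots,\pi(s-1)$, and hence $B$, are functions of \emph{all} users' realizations (through $\mathsf{C}_{1}$, $\mathsf{C}_{2}$, and the random tie-break), so $|\mathcal{N}_{s}|$ is not literally $\mathrm{Binomial}(N,p_{s})$. I would circumvent this by bounding $|\mathcal{N}_{s}|$ uniformly over admissible $B$: whatever the scheduler does, $B$ is the Gram--Schmidt orthonormalization of $s-1$ of the fixed i.i.d.\ isotropic user directions, and for each of the $O(N^{s-1})$ candidate configurations the number of the remaining users $\alpha$-decorrelated from the associated $B$ concentrates around $Np_{s}$ with deviation $O(\sqrt{N\log N})$ (a Hoeffding estimate, since those directions are i.i.d.\ and independent of $B$); a union bound over the candidates then yields $|\mathcal{N}_{s}|\ge Np_{s}-O(\sqrt{N\log N})\gtrsim N\alpha^{2(S-1)}$ with probability tending to one. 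The remaining ingredients---the simplex integral for $p_{s}$ and the per-configuration concentration---are routine.
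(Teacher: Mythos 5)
Your proposal is correct and rests on the same core counting idea as the paper's proof: $|\mathcal{N}_s|$ is roughly $N$ times the probability that an isotropic direction in $\mathbb{C}^S$ is $\alpha$-decorrelated from the $s-1$ orthogonalized selected directions, and that probability is bounded below by (a constant times) $\alpha^{2(S-1)}$. The differences are in the two supporting steps. For the per-user survival probability, the paper replaces the event ``each of the $s-1$ coordinates is small'' by the stricter event that the squared norm of the projection onto the whole span is below $\alpha^2$; since that projection ratio is $\mathrm{Beta}(s-1,S-s+1)$, the probability is exactly $I_{\alpha^2}(s-1,S-s+1)\ge I_{\alpha^2}(S-1,1)=\alpha^{2(S-1)}$, valid for every $\alpha\in(0,1]$ with constant $1$. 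Your simplex computation gives the exact survival probability, but you then pass from an asymptotic $\frac{(S-1)!}{(S-s)!}\alpha^{2(s-1)}(1+o(1))$ as $\alpha\to 0$ to a bound $p_s\ge c(S)\alpha^{2(S-1)}$ for all $\alpha\le 1$; that inference is not immediate from the asymptotic alone, though it is easily repaired by noting that your event contains the paper's span event, so $p_s\ge I_{\alpha^2}(s-1,S-s+1)\ge\alpha^{2(S-1)}$ directly. On the other hand, your final paragraph is more rigorous than the paper where it matters: the paper simply writes $|\mathcal{N}_s|\approx N\cdot\Pr\{\cdot\}$ and invokes the law of large numbers, ignoring that the selected set (hence the orthonormal frame) depends on all users' realizations; your uniform-over-configurations Hoeffding bound with a union over the $O(N^{s-1})$ candidate selections is exactly the kind of argument needed to make the ``$\gtrsim$, tight as $N$ grows'' claim precise.
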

\begin{proof}
See Appendix \ref{app:Ns_cardinality}.
\end{proof}

We also introduce the following useful lemma.
\begin{lemma}\label{lemma:quadratic}
If $x \in \mathbb{C}^{M \times 1}$ has its element i.i.d. according
to $\mathcal{CN}(0, \sigma^2)$ and $\mathbf{A}$ is an idempotent
matrix of rank $r$ (i.e., $\mathbf{A}^2= \mathbf{A}$), then
$\mathbf{x}^{\mathsf{H}} \mathbf{A} \mathbf{x}/\sigma^2$ has a
Chi-squared distribution with $2r$ degrees-of-freedom.
\end{lemma}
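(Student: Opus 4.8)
The plan is to reduce the quadratic form $\mathbf{x}^{\mathsf{H}}\mathbf{A}\mathbf{x}$ to a \emph{diagonal} one by a unitary change of basis, and then to invoke the rotational invariance of an i.i.d.\ circularly symmetric complex Gaussian vector. First I would diagonalize $\mathbf{A}$: since $\mathbf{A}$ is idempotent — and, in every situation where this lemma is used, it is the orthogonal projection onto the complement of a span of effective channels, hence Hermitian — its eigenvalues lie in $\{0,1\}$, with eigenvalue $1$ of multiplicity $r=\rank(\mathbf{A})$. Therefore $\mathbf{A}=\mathbf{U}\,\diag(\mathbf{I}_r,\mathbf{0}_{M-r})\,\mathbf{U}^{\mathsf{H}}$ for some unitary $\mathbf{U}$, and with $\tilde{\mathbf{x}}=\mathbf{U}^{\mathsf{H}}\mathbf{x}$ we get $\mathbf{x}^{\mathsf{H}}\mathbf{A}\mathbf{x}=\sum_{i=1}^{r}|\tilde{x}_i|^2$.

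Next I would show that $\tilde{\mathbf{x}}$ has exactly the same law as $\mathbf{x}$: the vector $\mathbf{x}$ is zero-mean, circularly symmetric, with covariance $\sigma^2\mathbf{I}_M$ and vanishing pseudo-covariance, and all three properties are preserved under the map $\mathbf{x}\mapsto\mathbf{U}^{\mathsf{H}}\mathbf{x}$; hence $\tilde{\mathbf{x}}$ is again i.i.d.\ $\mathcal{CN}(0,\sigma^2)$. Consequently $\mathbf{x}^{\mathsf{H}}\mathbf{A}\mathbf{x}/\sigma^2$ has the same distribution as $\sum_{i=1}^{r}|\tilde{x}_i|^2/\sigma^2$, a sum of $r$ independent terms. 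Finally, expressing each $\tilde{x}_i$ through its real and imaginary parts — independent real Gaussians — each $|\tilde{x}_i|^2/\sigma^2$ is, up to the scaling fixed by the variance convention, a sum of two squared standard real Gaussians, i.e.\ chi-squared with $2$ degrees of freedom; adding the $r$ independent copies gives chi-squared with $2r$ degrees of freedom, which is the claim. Equivalently, one may simply note that $\sum_{i=1}^{r}|\tilde{x}_i|^2/\sigma^2$ is a sum of $r$ i.i.d.\ unit-mean exponentials and hence Gamma$(r,1)$, which is the same family.

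The \textbf{main subtlety} lies in that very first reduction step: mere idempotency does not suffice. For a non-Hermitian idempotent $\mathbf{A}$ the form $\mathbf{x}^{\mathsf{H}}\mathbf{A}\mathbf{x}$ need not even be real (a rank-one $2\times2$ example shows this at once), and the similarity that diagonalizes $\mathbf{A}$ is not unitary, so it would destroy the isotropy that the second step relies on. The proof must therefore either use explicitly — as is the case in Appendix~\ref{app:Ns_cardinality}, where $\mathbf{A}$ projects onto the orthogonal complement of a span of effective channels — that $\mathbf{A}$ is an \emph{orthogonal} projection, or amend the statement to "Hermitian idempotent". The only other point that must be handled carefully is the bookkeeping of the $\mathcal{CN}(0,\sigma^2)$ variance convention, so that the degrees-of-freedom count comes out as exactly $2r$ rather than being off by a factor of two.
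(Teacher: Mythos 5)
Your argument is correct, but it is genuinely different from what the paper does: the paper offers no proof at all for this lemma, deferring entirely to the textbook citation \cite{G_Seber03_Book}, whereas you supply the standard self-contained argument (unitary diagonalization of the projector, rotational invariance of the i.i.d.\ $\mathcal{CN}(0,\sigma^2)$ vector, and the identification of each $|\tilde{x}_i|^2/\sigma^2$ as a $\chi^2_2$/unit-exponential variate). What your route buys is precisely the observation you flag as the main subtlety: the lemma as stated requires only idempotency, which is strictly insufficient --- for a non-Hermitian idempotent the quadratic form need not be real and the diagonalizing similarity is not unitary, so the isotropy argument breaks. The paper implicitly papers over this by citing a reference in which the corresponding real-case result is stated for \emph{symmetric} idempotent matrices, and its only actual use of the lemma (the matrix $\mathbf{P}$ constructed in the proof of Theorem~\ref{theorem:MUD}) is indeed a Hermitian orthogonal projector of rank $S-s+1$, so no harm is done downstream; still, your proposed amendment of the hypothesis to ``Hermitian idempotent'' is the correct fix to the statement. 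One small slip: the projector in question is introduced in the body of the proof of Theorem~\ref{theorem:MUD}, not in the appendix proving the cardinality bound, but this does not affect your mathematics.
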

\begin{proof}
See \cite{G_Seber03_Book}.
\end{proof}

In addition, the following lemma on the achievable rate of the
SE-ODIA will be used to show the achievability of optimal multiuser
diversity gain.
\begin{lemma}\label{lemma:effective_gain}
For the $j$-th selected user in the $i$-th cell, the achievable rate
is bounded by
\begin{align} \label{eq:data_rate_single_user4}
R^{[i,j]}\ge \log_2 \left( 1+ \frac{ \frac{\left\|
\mathbf{b}_{j}^{[i]} \right\|^2}{1+ \frac{(S-1)^4
\alpha^2}{1-(S-1)\alpha^2}} }{ \frac{S}{\mathsf{SNR}} + \sum_{k\neq
i}^{K} \sum_{s=1}^{S} \left|
{\mathbf{f}_{k}^{[i,j]}}^{\mathsf{H}}\mathbf{v}^{[k,s]}\right|^2 }
\right).
\end{align}
\end{lemma}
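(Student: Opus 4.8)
The plan is to keep the denominator of the rate expression --- inter-cell interference plus noise --- intact and only lower-bound the numerator. Since the SE-ODIA uses perfect feedback of the effective desired channels, the user-specific precoder is exactly (\ref{eq:ZF_BF}), there is no residual intra-cell interference, and the rate of a selected user is given verbatim by (\ref{eq:data_rate_single_user}), whose denominator coincides with the one appearing in (\ref{eq:data_rate_single_user4}). Hence, by monotonicity of $\log_2(\cdot)$, it suffices to prove the scalar inequality $\gamma^{[i,j]} \ge \|\mathbf{b}_j^{[i]}\|^2 \big/ \big(1 + \tfrac{(S-1)^4\alpha^2}{1-(S-1)\alpha^2}\big)$, i.e.\ that the zero-forcing power penalty paid by user $j$ relative to its Gram--Schmidt component $\mathbf{b}_j^{[i]}$ is at most the factor $1 + \tfrac{(S-1)^4\alpha^2}{1-(S-1)\alpha^2}$.

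First I would give a geometric reading of $\gamma^{[i,j]}$. Because $\mathbf{F}_i\mathbf{V}_i = \boldsymbol{\Gamma}_i$ is diagonal and the columns of $\mathbf{V}_i = \mathbf{F}_i^{-1}\boldsymbol{\Gamma}_i$ are unit-norm (this being the unit transmit-power normalization that defines $\sqrt{\gamma^{[i,j]}}$), the $j$-th column of $\mathbf{V}_i$ is, up to a phase, the unique unit vector $\mathbf{e}_j$ orthogonal to $\mathrm{span}\{\mathbf{f}_{i}^{[i,j']} : j' \in \mathcal{S}\setminus\{j\}\}$, so that $\gamma^{[i,j]} = |{\mathbf{f}_{i}^{[i,j]}}^{\mathsf{H}}\mathbf{e}_j|^2$ equals the squared Euclidean distance from $\mathbf{f}_{i}^{[i,j]}$ to $\mathrm{span}\{\mathbf{f}_{i}^{[i,j']} : j' \in \mathcal{S}\setminus\{j\}\}$. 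On the other hand, since the semiorthogonal selection accepts users in the order $1,\dots,S$, the vectors $\mathbf{b}_1^{[i]},\dots,\mathbf{b}_S^{[i]}$ are exactly the Gram--Schmidt orthogonalization of $\mathbf{f}_{i}^{[i,1]},\dots,\mathbf{f}_{i}^{[i,S]}$, so $\|\mathbf{b}_j^{[i]}\|^2$ is the squared distance from $\mathbf{f}_{i}^{[i,j]}$ to the smaller subspace $\mathrm{span}\{\mathbf{f}_{i}^{[i,1]},\dots,\mathbf{f}_{i}^{[i,j-1]}\}$. Writing $\mathbf{c}_{j'}$ for the component of $\mathbf{f}_{i}^{[i,j']}$ ($j' > j$) orthogonal to $\mathrm{span}\{\mathbf{b}_1^{[i]},\dots,\mathbf{b}_{j-1}^{[i]}\}$ and $\mathbf{C} = [\mathbf{c}_{j+1},\dots,\mathbf{c}_S]$, a routine orthogonal-complement computation gives $\gamma^{[i,j]} = \|\mathbf{b}_j^{[i]}\|^2 - {\mathbf{b}_j^{[i]}}^{\mathsf{H}}\mathbf{C}(\mathbf{C}^{\mathsf{H}}\mathbf{C})^{-1}\mathbf{C}^{\mathsf{H}}\mathbf{b}_j^{[i]}$, so the task reduces to bounding the leakage term ${\mathbf{b}_j^{[i]}}^{\mathsf{H}}\mathbf{C}(\mathbf{C}^{\mathsf{H}}\mathbf{C})^{-1}\mathbf{C}^{\mathsf{H}}\mathbf{b}_j^{[i]}$ by $\tfrac{(S-1)^4\alpha^2}{1-(S-1)\alpha^2}\,\gamma^{[i,j]}$.

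The leakage bound is where the semiorthogonality threshold $\alpha$ enters. Every later-selected user $j' > j$ passed the Step~4 filters against $\mathbf{b}_1^{[i]},\dots,\mathbf{b}_j^{[i]}$, so $|{\mathbf{f}_{i}^{[i,j']}}^{\mathsf{H}}\mathbf{b}_s^{[i]}| < \alpha\|\mathbf{f}_{i}^{[i,j']}\|\,\|\mathbf{b}_s^{[i]}\|$ for all $s \le j$; since $\mathbf{b}_j^{[i]}$ lies in the orthogonal complement used to form $\mathbf{c}_{j'}$, this gives $\mathbf{c}_{j'}^{\mathsf{H}}\mathbf{b}_j^{[i]} = {\mathbf{f}_{i}^{[i,j']}}^{\mathsf{H}}\mathbf{b}_j^{[i]}$ and hence $\|\mathbf{C}^{\mathsf{H}}\mathbf{b}_j^{[i]}\|^2 < (S-1)\alpha^2\,\|\mathbf{b}_j^{[i]}\|^2\max_{j' > j}\|\mathbf{f}_{i}^{[i,j']}\|^2$, and it also forces $\|\mathbf{c}_{j'}\|^2 > (1-(S-1)\alpha^2)\|\mathbf{f}_{i}^{[i,j']}\|^2$ while keeping the off-diagonal entries of $\mathbf{C}^{\mathsf{H}}\mathbf{C}$ at the $O(\alpha)$ level relative to the channel gains, so a Gershgorin/Neumann-series estimate yields a lower bound on $\lambda_{\min}(\mathbf{C}^{\mathsf{H}}\mathbf{C})$. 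Substituting these into $\|\mathbf{C}(\mathbf{C}^{\mathsf{H}}\mathbf{C})^{-1}\mathbf{C}^{\mathsf{H}}\mathbf{b}_j^{[i]}\|^2 \le \|\mathbf{C}^{\mathsf{H}}\mathbf{b}_j^{[i]}\|^2 / \lambda_{\min}(\mathbf{C}^{\mathsf{H}}\mathbf{C})$ and accounting, as in \cite{T_Yoo06_JSAC}, for how the channel-gain factors and the at-most-$(S-1)$ nearly orthogonal columns of $\mathbf{C}$ combine --- this is where the $(S-1)^4$ numerator and the $1-(S-1)\alpha^2$ denominator appear --- gives the leakage bound, whence $\|\mathbf{b}_j^{[i]}\|^2 = \gamma^{[i,j]} + (\text{leakage}) \le \big(1 + \tfrac{(S-1)^4\alpha^2}{1-(S-1)\alpha^2}\big)\gamma^{[i,j]}$, the desired inequality. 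I expect this last accounting to be the main obstacle: carefully propagating the $\alpha$-semiorthogonality of the Gram--Schmidt vectors $\mathbf{b}_s^{[i]}$ into simultaneous control of the conditioning of $\mathbf{C}^{\mathsf{H}}\mathbf{C}$ and of its cross-correlation with $\mathbf{b}_j^{[i]}$, while tracking the channel-gain normalizations so that precisely $(S-1)^4\alpha^2/(1-(S-1)\alpha^2)$ is produced, closely parallels the zero-forcing SINR analysis of \cite{T_Yoo06_JSAC}.
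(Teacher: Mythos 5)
Your proposal is correct and follows essentially the same route as the paper: reduce the claim to the single scalar inequality $\gamma^{[i,j]} = 1\big/\big[(\mathbf{F}_i\mathbf{F}_i^{\mathsf{H}})^{-1}\big]_{j,j} \ge \left\|\mathbf{b}_j^{[i]}\right\|^2\big/\big(1+\tfrac{(S-1)^4\alpha^2}{1-(S-1)\alpha^2}\big)$ and substitute it into the numerator of (\ref{eq:data_rate_single_user}), leaving the denominator untouched. The only difference is that the paper obtains this inequality by directly citing \cite[Lemma 2]{T_Yoo06_JSAC}, whereas you sketch a re-derivation of that cited lemma via the Gram--Schmidt/leakage decomposition, which you correctly identify as the only nontrivial step.
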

\begin{proof}
Since the chosen channel vectors are not perfectly orthogonal, there
is degradation in the effective channel gain $\gamma^{[i,j]}$.
Specifically, for the $j$-th selected user in the $i$-th cell, we
have
\begin{align}\label{eq:gamma_j_bound}
\gamma^{[i,j]} &= \frac{1}{\left[
\left(\mathbf{F}_i\mathbf{F}_i^{\mathsf{H}}\right)^{-1}
\right]_{j,j}} > \frac{\left\| \mathbf{b}_{j}^{[i]} \right\|^2}{1+
\frac{(S-1)^4 \alpha^2}{1-(S-1)\alpha^2}},
\end{align}
which follows from \cite[Lemma 2]{T_Yoo06_JSAC}. Inserting
(\ref{eq:gamma_j_bound}) to the sum-rate lower bound in
(\ref{eq:data_rate_single_user}) proves the lemma.
\end{proof}

Now the following theorem establishes the achievability of the
optimal multiuser diversity gain.

\begin{theorem}\label{theorem:MUD}
The proposed SE-ODIA scheme with
\begin{equation} \label{eq:eta_D_choice}
\eta_D = \epsilon_D \log \mathsf{SNR}
\end{equation}
\begin{equation}\label{eq:eta_I_choice}
\eta_I = \epsilon_I\mathsf{SNR}^{-1}
\end{equation}
for any $\epsilon_D, \epsilon_I >0$ achieves the optimal multiuser
diversity gain given by
\begin{equation}
R^{[i,j]} = \Theta\left( \log \left( \mathsf{SNR}\cdot \log
N\right)\right),
\end{equation}
with high probability for all selected users in the high SNR regime
if
\begin{align} \label{eq:N_scaling_MUD}
N = \omega \left(
\mathsf{SNR}^{\frac{(K-1)S-L+1}{1-(\epsilon_D/2)}}\right).
\end{align}
\end{theorem}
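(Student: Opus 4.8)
The plan is to combine three ingredients already in place: the lower bound on the achievable rate from Lemma~\ref{lemma:effective_gain}, the interference-control guarantee from Lemma~\ref{lemma:CDF_scaling}, and an order-statistics argument on the desired channel gains $\|\tilde{\mathbf{b}}_s^{[i,j]}\|^2$ over the reduced user pool $\mathcal{N}_s$ whose size is bounded in Lemma~\ref{lemma:N_card}. First I would invoke Lemma~\ref{lemma:effective_gain} to write $R^{[i,j]} \ge \log_2\left(1+ \frac{c_\alpha \|\mathbf{b}_j^{[i]}\|^2}{S/\mathsf{SNR} + \tilde{I}^{[i,j]}}\right)$ where $c_\alpha = \left(1+\frac{(S-1)^4\alpha^2}{1-(S-1)\alpha^2}\right)^{-1}$ is a fixed constant once $\alpha$ is chosen, and $\tilde{I}^{[i,j]} = \sum_{k\neq i}\sum_s |{\mathbf{f}_k^{[i,j]}}^{\mathsf{H}}\mathbf{v}^{[k,s]}|^2$ is the inter-cell interference. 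By constraint $\mathsf{C}_1$ in~(\ref{eq:C}), every selected user satisfies $\eta^{[i,j]} \le \eta_I = \epsilon_I \mathsf{SNR}^{-1}$, so $I^{[i,j]} = S\,\eta^{[i,j]}\mathsf{SNR} \le S\epsilon_I$ is bounded; hence $\tilde{I}^{[i,j]} = O(\mathsf{SNR}^{-1})$ and the denominator is $\Theta(\mathsf{SNR}^{-1})$, giving $R^{[i,j]} \ge \log_2\left(1 + \Theta(\mathsf{SNR}) \cdot \|\mathbf{b}_j^{[i]}\|^2\right)$ with high probability.

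Next I would show that the numerator $\|\mathbf{b}_j^{[i]}\|^2 = \|\tilde{\mathbf{b}}_s^{[i,\pi(s)]}\|^2$ is $\Theta(\log N)$ with high probability for all selected streams. By Lemma~\ref{lemma:quadratic}, $\|\tilde{\mathbf{b}}_s^{[i,j]}\|^2$ is (conditionally on the earlier chosen directions) a chi-squared random variable with $2(S-s+1)$ degrees of freedom up to a scalar --- the orthogonal-projection operator onto the complement of $\mathrm{span}\{\mathbf{b}_1^{[i]},\dots,\mathbf{b}_{s-1}^{[i]}\}$ is idempotent of the appropriate rank, and $\mathbf{f}_i^{[i,j]}$ has i.i.d. Gaussian entries. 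The key point is that constraint $\mathsf{C}_2$ requires $\|\tilde{\mathbf{b}}_s^{[i,j]}\|^2 \ge \eta_D = \epsilon_D \log\mathsf{SNR}$; I need the probability that such a user \emph{exists} in the pool $\mathcal{N}_s$ to tend to one, and moreover that the maximum of $\|\tilde{\mathbf{b}}_s^{[i,j]}\|^2$ over $\mathcal{N}_s$ is $\Theta(\log|\mathcal{N}_s|) = \Theta(\log N)$ (since $|\mathcal{N}_s| \gtrsim N\alpha^{2(S-1)}$ and $\alpha$ is a constant, $\log|\mathcal{N}_s| = \Theta(\log N)$). The standard extreme-value estimate for the maximum of $m$ i.i.d. chi-squared variates gives $\Theta(\log m)$; combined with a union bound over the finitely many cells and streams this yields $\|\mathbf{b}_j^{[i]}\|^2 = \Theta(\log N)$ w.h.p.

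The remaining step is to verify the user-scaling condition~(\ref{eq:N_scaling_MUD}) is exactly what makes \emph{both} constraints $\mathsf{C}_1$ and $\mathsf{C}_2$ satisfiable simultaneously for some user in each pool. Constraint $\mathsf{C}_1$ alone needs $N$ to grow like $\mathsf{SNR}^{(K-1)S-L+1}$ by (the proof of) Lemma~\ref{lemma:CDF_scaling}/Theorem~\ref{theorem:DoF}, since $\eta_I = \epsilon_I \mathsf{SNR}^{-1}$ plays the role of the threshold $\epsilon/(S\,\mathsf{SNR})$ there. But now we additionally demand that among the users meeting $\mathsf{C}_1$, at least one also meets $\mathsf{C}_2$. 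Since $\eta_D = \epsilon_D\log\mathsf{SNR}$, the fraction of users with $\|\tilde{\mathbf{b}}_s^{[i,j]}\|^2 \ge \epsilon_D\log\mathsf{SNR}$ behaves (for a chi-squared tail) like $\mathsf{SNR}^{-\epsilon_D/2}$ up to polynomial-in-$\log$ factors, so to retain $\omega(1)$ such users after the $\mathsf{C}_1$ thinning we need $N \cdot \mathsf{SNR}^{-((K-1)S-L+1)} \cdot \mathsf{SNR}^{-\epsilon_D/2} \to \infty$, i.e. $N = \omega(\mathsf{SNR}^{((K-1)S-L+1)+\epsilon_D/2})$; rearranging, $N = \omega(\mathsf{SNR}^{\frac{(K-1)S-L+1}{1-\epsilon_D/2}})$ suffices (the two exponent forms agree to first order, and one should check the paper's claimed form dominates). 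Putting the pieces together, w.h.p. $R^{[i,j]} \ge \log_2(1+\Theta(\mathsf{SNR})\cdot\Theta(\log N)) = \Theta(\log(\mathsf{SNR}\cdot\log N))$ for all selected users, and the matching upper bound $R^{[i,j]} = O(\log(\mathsf{SNR}\log N))$ follows from the single-user bound with coherent combining of $L$ antennas (giving at most a $\log\log$-scale gain), establishing optimality. The main obstacle I anticipate is the conditioning in the chi-squared argument: the projection directions $\mathbf{b}_1^{[i]},\dots,\mathbf{b}_{s-1}^{[i]}$ are chosen adaptively based on the same channel realizations, so the $\|\tilde{\mathbf{b}}_s^{[i,j]}\|^2$ for $j\in\mathcal{N}_s$ are not unconditionally independent; handling this requires either conditioning on the filtration generated by the previously selected users (as in~\cite{T_Yoo06_JSAC}) or the approximate-orthogonality bound from Lemma~\ref{lemma:N_card} to show the conditional and unconditional distributions are close as $N\to\infty$.
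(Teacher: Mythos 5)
Your proposal follows essentially the same route as the paper's proof: lower-bound the rate via Lemma~\ref{lemma:effective_gain} using $\|\mathbf{b}_{j}^{[i]}\|^2\ge\eta_D$ and $\eta^{[i,j]}\le\eta_I$ for selected users, and show that a user satisfying both $\mathsf{C}_1$ and $\mathsf{C}_2$ exists with high probability by combining the pool-size bound of Lemma~\ref{lemma:N_card}, the chi-squared tail of $\|\tilde{\mathbf{b}}_{s}^{[i,j]}\|^2$ from Lemma~\ref{lemma:quadratic}, and the CDF of $\eta^{[i,j]}$, exactly as in Appendix~\ref{app:MUD}. The only deviations are inessential: the extreme-value argument for the maximum over $\mathcal{N}_s$ is not needed (the scheme selects a random qualifying user, so only the threshold $\ge\eta_D$ is guaranteed and used), and your exponent bookkeeping with $\textrm{Pr}\{\mathsf{C}_2\}\approx\mathsf{SNR}^{-\epsilon_D/2}$ produces a scaling condition that is implied by (\ref{eq:N_scaling_MUD}) because $(K-1)S-L+1\ge 1$.
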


\begin{proof}
Amongst $|\mathcal{N}_s|$ users, there should exist at least one
user satisfying the conditions $\mathsf{C}_1$ and $\mathsf{C}_2$ to
make the proposed user scheduling for the SE-ODIA valid. Thus, we
first show the probability that there exist at least one valid user,
denoted by $\mathsf{p}_s$, converges to 1, for the $s$-th user
selection, if $N$ scales according to (\ref{eq:N_scaling_MUD}) with
the choices (\ref{eq:eta_D_choice}) and (\ref{eq:eta_I_choice}).

The probability that each user satisfies the two conditions is given
by $\textrm{Pr} \left\{ \mathsf{C}_1  \right\} \cdot \textrm{Pr}
\left\{ \mathsf{C}_2  \right\}$, because the two conditions are
independent of each other. Consequently, $\mathsf{p}_{s}$ is given
by
\begin{align}
\mathsf{p}_s &= 1- \left( 1- \textrm{Pr} \left\{ \mathsf{C}_1  \right\} \cdot \textrm{Pr} \left\{ \mathsf{C}_2  \right\} \right)^{|\mathcal{N}_s|} \\
\label{eq:p_s2}& \gtrsim 1- \left( 1- \textrm{Pr} \left\{
\mathsf{C}_1  \right\} \cdot \textrm{Pr} \left\{ \mathsf{C}_2
\right\} \right)^{N \cdot \alpha^{2(S-1)}}.
\end{align}
Note that each element of ${\mathbf{f}_{i}^{[i,j]}}^{\mathsf{H}} =
{\mathbf{u}^{[i,j]}}^{\mathsf{H}} \mathbf{H}^{[i,j]}_{i}
\mathbf{P}_i$ is i.i.d. according to $\mathcal{CN}(0,1)$, because
each column of $\mathbf{P}_i$ is a random orthogonal unit vector and
because ${\mathbf{u}^{[i,j]}}^{\mathsf{H}}$ is designed
independently of $\mathbf{H}_i^{[i,j]}$ and isotropically
distributed over a unit sphere. Thus,
${\mathbf{f}_{i}^{[i,j]}}^{\mathsf{H}} =
{\mathbf{u}^{[i,j]}}^{\mathsf{H}} \mathbf{H}^{[i,j]}_{i}
\mathbf{P}_i$ has its element i.i.d. according to
$\mathcal{CN}(0,1)$.

Let us define $\mathbf{P}$ by
\begin{align}
\mathbf{P} &\triangleq \left( \mathbf{I} - \sum_{s'=0}^{s-1} \frac{
\mathbf{b}_{s'}^{[i]}{\mathbf{b}_{s'}^{[i]}}^{\mathsf{H}}}{
\|\mathbf{b}_{s'}^{[i]}\|^2}\right),
\end{align}
which is a symmetric idempotent matrix with rank $(S-s+1)$. Since
$\mathbf{b}_{s}^{[i]} = \mathbf{P}\mathbf{f}_{i}^{[i,j]}$, from
Lemma \ref{lemma:quadratic}, $\left\|\mathbf{b}_{s}^{[i]}\right\|^2$
is a Chi-squared random variable with $2(S-s+1)$ degrees-of-freedom.

In Appendix \ref{app:MUD}, for $\eta_D>2$, we show that
\begin{equation} \label{eq:ps_conv}
\lim_{\mathsf{SNR}\rightarrow \infty} \mathsf{p}_s = 1,
\hspace{10pt} \textrm{if } N = \omega \left(
\mathsf{SNR}^{\frac{(K-1)S-L+1}{1-(\epsilon_D/2)}}\right).
\end{equation}

Now, given that there always exist at least one user that satisfies
the conditions $\mathsf{C}_1$ and $\mathsf{C}_2$, the achievable
sum-rate can be bounded from Lemma \ref{lemma:effective_gain} by
\begin{align} \label{eq:data_rate_single_user5}
R^{[i,j]}&\ge \log_2 \left( 1+ \frac{ \frac{\left\| \mathbf{b}_{j}^{[i]} \right\|^2}{1+ \frac{(S-1)^4 \alpha^2}{1-(S-1)\alpha^2}} \cdot \frac{1}{\left\|\mathbf{v}_i^{\max} \right\|^2}}{ \frac{S}{\mathsf{SNR}\left\|\mathbf{v}_i^{\max} \right\|^2}+ \sum_{k\neq i}^{K} \sum_{s=1}^{S} \left\| \mathbf{f}_{k}^{[i,j]}\right\|^2 } \right)\\
\label{eq:data_rate_single_user6}& \ge \log_2 \left( 1+ \frac{ \frac{\left\| \mathbf{b}_{j}^{[i]} \right\|^2}{1+ \frac{(S-1)^4 \alpha^2}{1-(S-1)\alpha^2}} \cdot \mathsf{SNR}/\left\|\mathbf{v}_i^{\max} \right\|^2}{ S/\left\|\mathbf{v}_i^{\max} \right\|^2+ KS\epsilon_I} \right)\\
\label{eq:data_rate_single_user7}& = \log_2 \left( 1+ \left\| \mathbf{b}_{j}^{[i]} \right\|^2\mathsf{SNR} \cdot \xi \right)\\
\label{eq:data_rate_single_user8}& \ge \log_2 \left( 1+
\epsilon_D(\log N)\cdot \mathsf{SNR} \right),
\end{align}
where (\ref{eq:data_rate_single_user6}) follows from  the fact that
the sum-interference for all selected users, given by
$\sum_{j=1}^{S}\sum_{i=1}^{K} \eta^{[i,j]}\mathsf{SNR}$ (See
(\ref{eq:sum_interference_equiv})), does not exceed $KS \epsilon_I $
by choosing $\eta_I = \epsilon_I \mathsf{SNR}^{-1}$. Furthermore,
$\xi$ is a constant given by
\begin{equation}
\xi = \frac{1}{\left\| \mathbf{v}_i^{\max}\right\|^2 \left( 1+
\frac{(S-1)^4
\alpha^2}{1-(S-1)\alpha^2}\right)\left(S/\left\|\mathbf{v}_i^{\max}
\right\|^2+ KS\epsilon_I\right)},
\end{equation}
and (\ref{eq:data_rate_single_user8}) follows from $\|
\mathbf{b}_{j}^{[i]} \|^2 \ge \eta_D  = \epsilon_D \log N$.
Therefore, the proposed SE-ODIA achieves the optimal multiuser
diversity gain $\log\log N$ in the high SNR regime, if $N = \omega
\left( \mathsf{SNR}^{\frac{(K-1)S-L+1}{1-(\epsilon_D/2)}}\right)$.
\end{proof}

Therefore, the optimal multiuser gain of $\log\log N$ is achieved
using the proposed SE-ODIA with the choices of
(\ref{eq:eta_D_choice}) and (\ref{eq:eta_I_choice}). Note that since
small $\epsilon_D$ suffices to obtain the optimal multiuser gain,
the condition on $N$ does not dramatically change compared with that
required to achieve $KS$ DoF (See Theorem \ref{theorem:DoF}).
Combining the results in Theorem \ref{theorem:DoF} and
\ref{theorem:MUD}, we can conclude the achievability of the optimal
DoF and multiuser gain as follows.
\begin{remark}
In fact, the ODIA described in Section \ref{SEC:OIA} can be
implemented using the SE-ODIA approach by choosing $\eta_D = 0$,
$\alpha = 1$, and $\eta_I^{[i]} = \min\left\{ \eta^{[i,1]}, \ldots,
\eta^{[i,N]}\right\}$, where $\eta_I^{[i]}$ denotes $\eta_I$ at the
$i$-th cell. In summary, the optimal $KM$ DoF and optimal multiuser
gain of $\log \log N$ can be achieved using the proposed ODIA
framework, if the number of users per cell increases according to $N
= \omega\left(
\mathsf{SNR}^{\frac{(K-1)M-L+1}{1-(\epsilon_D/2)}}\right)$ for any
$\epsilon_D>0$.
\end{remark}

\section{Numerical Results} \label{SEC:Sim}
In this section, we compare the performance of the proposed ODIA
with two conventional schemes which also utilize the multi-cell
random beamforming technique at BSs. First, we consider ``max-SNR"
technique, in which each user designs the receive beamforming vector
in the sense to maximize the desired signal power, and feeds back
the maximized signal power to the corresponding BS. Each BS selects
$S$ users who have higher received signal power. Second, ``min-INR"
technique is considered, in which each user performs receive
beamforming in order to minimize the sum of inter-cell interference
and intra-cell interference\cite{H_Nguyen13_arXiv, J_Lee13_arXiv}.
Hence, intra-cell interference does not vanish at users, while the
proposed ODIA perfectly eliminates it via transmit beamforming.
Specifically, from (\ref{eq:rec_vector_after_BF}), the $j$-th user
in the $i$-th cell should calculate the following $S$ scheduling
metrics
 \begin{align}
\eta^{[i,j]}_{\textrm{min-INR}, m} &= \underbrace{\left\|{\mathbf{u}^{[i,j],m}}^{\mathsf{H}} \mathbf{H}_i^{[i,j]}\tilde{\mathbf{P}}_{i,m} \right\|^2}_{\textrm{intra-cell interference}} \nonumber \\
& \hspace{20pt}+ \underbrace{\sum_{k=1, k\neq
i}^{K}\left\|{\mathbf{u}^{[i,j],m}}^{\mathsf{H}}
\mathbf{H}_k^{[i,j]}\mathbf{P}_{k} \right\|^2}_{\textrm{inter-cell
interference}},
\end{align}
for $m=1, \ldots, S$, where $\tilde{\mathbf{P}}_{i,m} \triangleq
\left[ \mathbf{p}_{1, i}, \ldots, \mathbf{p}_{m-1,i},
\mathbf{p}_{m+1, i}, \ldots, \mathbf{p}_{S,i} \right]$. For each
$m$, the receive beamforming vector $\mathbf{u}^{[i,j],m}$ is
assumed to be designed such that $\eta^{[i,j]}_{\textrm{min-INR},
m}$ is minimized. Each user feedbacks $S$ scheduling metrics to the
corresponding BS, and the BS selects the user having the minimum
scheduling metric for the $m$-th spatial stream, $m=1, \ldots, S$.
For more details about the min-INR scheme, refer to
\cite{H_Nguyen13_arXiv, J_Lee13_arXiv}.

Fig. \ref{fig:Interf_N} shows the sum-interference at all users for
varying number of users per cell, $N$, when $K=3$, $M=4$, $L=2$, and
SNR=$20$dB. The solid lines are obtained from Theorem
\ref{theorem:scaling_decay} with proper biases, and thus only the
slopes of the solid lines are relevant. The decaying rates of
sum-interference of the proposed ODIA are higher than those of the
min-INR scheme since intra-cell interference is perfectly eliminated
in the proposed ODIA. In addition, the interference decaying rates
of the proposed ODIA are consistent with the theoretical results of
Theorem \ref{theorem:scaling_decay}, which proves that the user
scaling condition derived in Theorem \ref{theorem:DoF} and the
interference bound in Theorem \ref{theorem:scaling_decay} are in
fact accurate and tight.

\begin{figure}[t]
\begin{center}
  \includegraphics[width=0.7\textwidth]{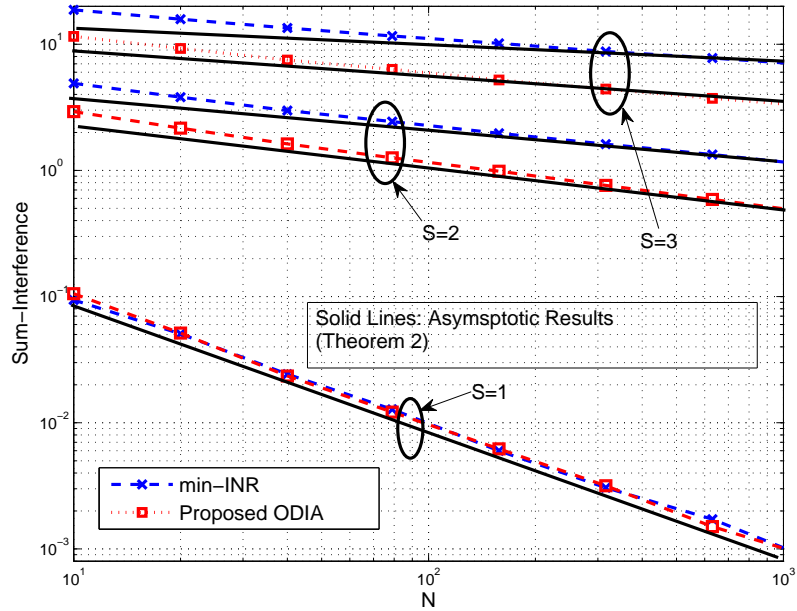}\\
  \caption{Normalized sum-interference vs. $N$ when $K=3$, $M=4$, $L=2$.}\label{fig:Interf_N}
  \end{center}
\end{figure}

To evaluate the sum-rates of the proposed ODIA schemes, the
parameters $\eta_I$, $\eta_D$, and $\alpha$ need to be optimized for
the SE-ODIA. Fig. \ref{fig:rates_eta} shows the sum-rate performance
of the proposed SE-ODIA for varying $\eta_I$ or $\eta_D$ with two
different $\alpha$ values when $K=3$, $M=4$, $L=2$, $S=2$, and
$N=20$. To obtain the sum-rate according to $\eta_I$, $\eta_D$ was
fixed to $1$. Similarly, for the sum-rate according to $\eta_D$,
$\eta_I$ was fixed to $1$. If $\eta_I$ is too small, then there may
not be eligible users that satisfy the conditions $\mathsf{C}_1$ and
$\mathsf{C}_2$ in (\ref{eq:C}). Thus, \textit{scheduling
outage}~\footnote{It indicates the situation that there are no users
who are eligible for scheduling.} can occur frequently and the
achievable sum-rate becomes low. On the other hand, if $\eta_I$ is
too large, then the received interference at users may not be
sufficiently suppressed. Thus, the achievable sum-rate converges to
that of the system without interference suppression. Similarly, if
$\eta_D$ is too large, then the scheduling outage occurs; and if
$\eta_D$ is too small, then desired channel gains cannot be
improved. The orthogonality parameter $\alpha$ plays a similar role;
if $\alpha$ is too small, the cardinality of the user pool
$|\mathcal{N}_s|$ often becomes smaller than $S$, and scheduling
outage happens frequently. If $\alpha$ is too large, then the
orthogonality of the effective channel vectors of the selected users
is not taken into account for scheduling. In short, the parameters
$\eta_I$, $\eta_D$, and $\alpha$ need to be carefully chosen to
improve the performance of the proposed SE-ODIA. In subsequent
sum-rate simulations, proper sets of $\eta_I$, $\eta_D$, and
$\alpha$ were numerically found for various $N$ and SNR values and
applied to the SE-ODIA.
\begin{figure}[t]
\begin{center}
  \includegraphics[width=0.7\textwidth]{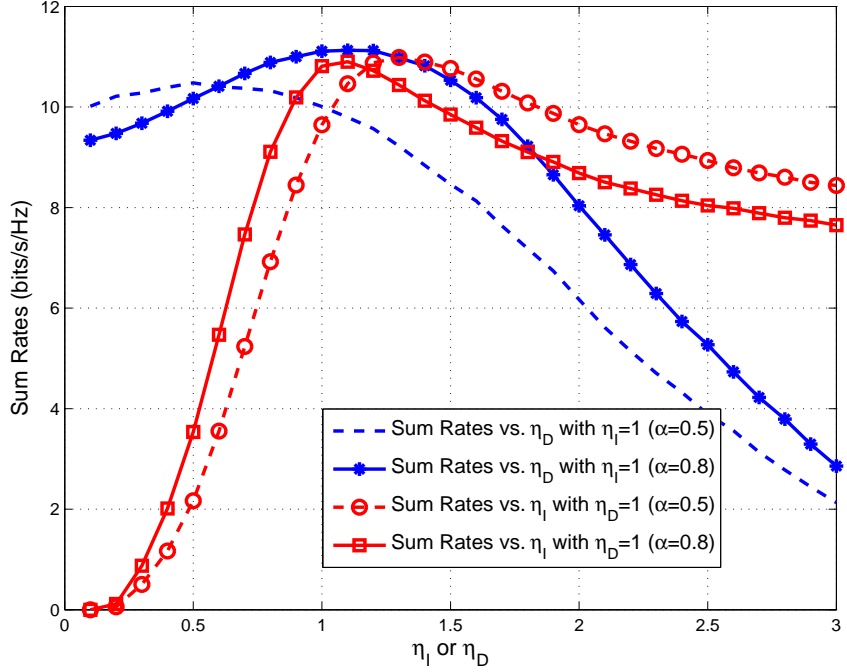}\\
  \caption{Sum-rates of SE-ODIA vs. $\eta_D$ or $\eta_I$ when $K=3$, $M=4$, $L=2$, $S=2$, and $N=20$.}\label{fig:rates_eta}
  \end{center}
\end{figure}
For instance, optimal $(\eta_I, \eta_D, \alpha)$ values that
maximize the sum-rate for a few cases are provided in Table
\ref{table:param}. It is seen that in the noise-limited low SNR
regime, large $\eta_D$ helps, whereas in the interference-limited
high SNR regime, small $\eta_I$ improves the sum-rate. On the other
hand, as $N$ increases, interference can be suppressed by choosing
smaller $\eta_I$ values.
\begin{table} \caption{Optimized parameters $(\eta_I, \eta_D, \alpha)$ for different SNRs and $N$ values}  \label{table:param}
\begin{center}
\begin{tabular}{|c|c|c|}
  \hline
   & $N$=20 & $N$=50 \\ \hline
  SNR=3dB & (2.5, 2.5, 0.8) & (2, 2.5, 0.8) \\ \hline
  SNR=21dB &  (1.5, 2, 0.8) & (1, 2, 0.8) \\
  \hline
\end{tabular}
\end{center}
\end{table}

Fig. \ref{fig:rates_SNR} shows the sum-rates for varying SNR values
when $K=3$, $M=4$, $L=2$, $S=2$, and (a) $N=20$ and (b) $N=50$.
 In the noise-limited low SNR regime, the sum-rate of the min-INR scheme is even lower than that of the max-SNR scheme, because $N$ is not large enough to suppress both intra- and inter-cell interference. The proposed ODIA outperforms the conventional schemes for SNRs larger than 2dB due to the combined effort of 1) transmit beamforming perfectly eliminating intra-cell interference and 2) receive beamforming effectively reducing inter-cell interference. The sum-rate performance of the ODIA with limited feedback~(ODIA-LF) improves as $n_f$ increases as expected. In practice, $n_f=6$ exhibits a good compromise between the number of feedback bits and sum-rate performance for the codebook dimension of 2 (i.e., $S=2$).
On the other hand, the proposed SE-ODIA achieves higher sum-rates
than the others including the ODIA for all SNR regime, because the
SE-ODIA improves desired channel gains and suppresses interference
simultaneously. Note however that the SE-ODIA includes the
optimization on the parameters for given SNR and $N$ and requires
the user scheduling method based on perfect CSI feedback, which
demands higher computational complexity than the user scheduling of
the ODIA.
As shown in Fig. \ref{fig:rates_SNR_N50}, the amount of sum-rates
improvement of the proposed ODIA schems for growing $N$ is much
larger than those of the conventional schemes.

\begin{figure}
\begin{center}
\subfigure[]{
\includegraphics[width=0.64\textwidth]{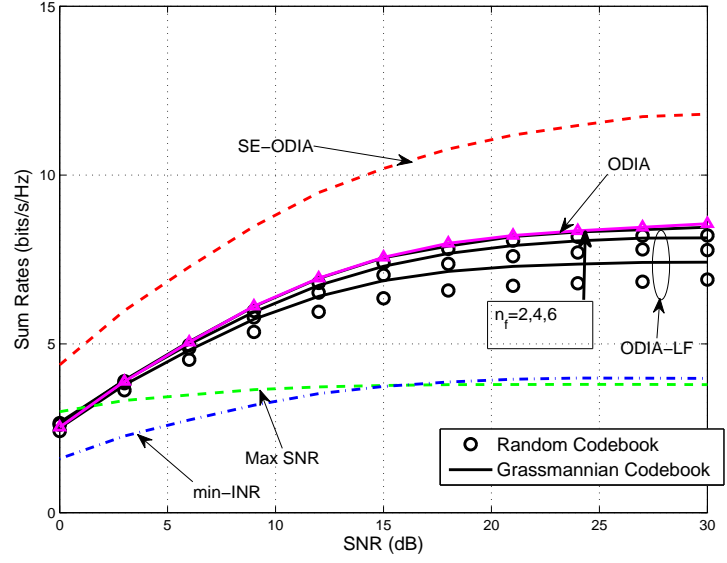}\label{fig:rates_SNR_N20} }
     \subfigure[]{
\includegraphics[width=0.64\textwidth]{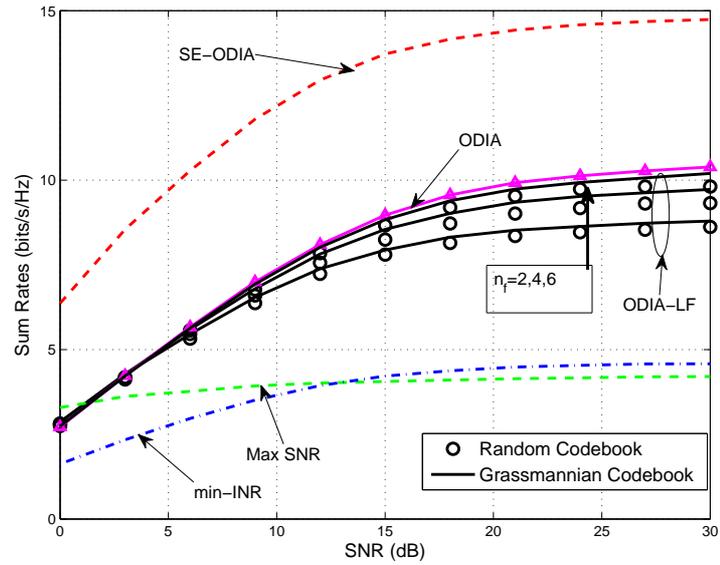}
  \label{fig:rates_SNR_N50}}
  \caption{Sum-rates versus SNR when $K=3$, $M=4$, $L=2$, $S=2$, and (a) $N=20$ (b) $N=50$.} \label{fig:rates_SNR}
\end{center}
\end{figure}

Fig. \ref{fig:rates_N_linear} shows the sum-rate performance of the
proposed ODIA schemes for varying number of users per cell, $N$,
when $K=3$, $M=4$, $L=2$, $S=2$, and SNR=$20$dB.
  For limited feedback, the Grassmannian codebook was employed. The sum-rates of the proposed ODIA schemes increase faster than the two conventional schemes, which implies that the user scaling conditions of the proposed ODIA schemes required for a given DoF or MUD gain are lowered than the conventional schemes, as shown in Theorems \ref{theorem:DoF} and \ref{theorem:MUD}.
\begin{figure}
\begin{center}
  \includegraphics[width=0.7\textwidth]{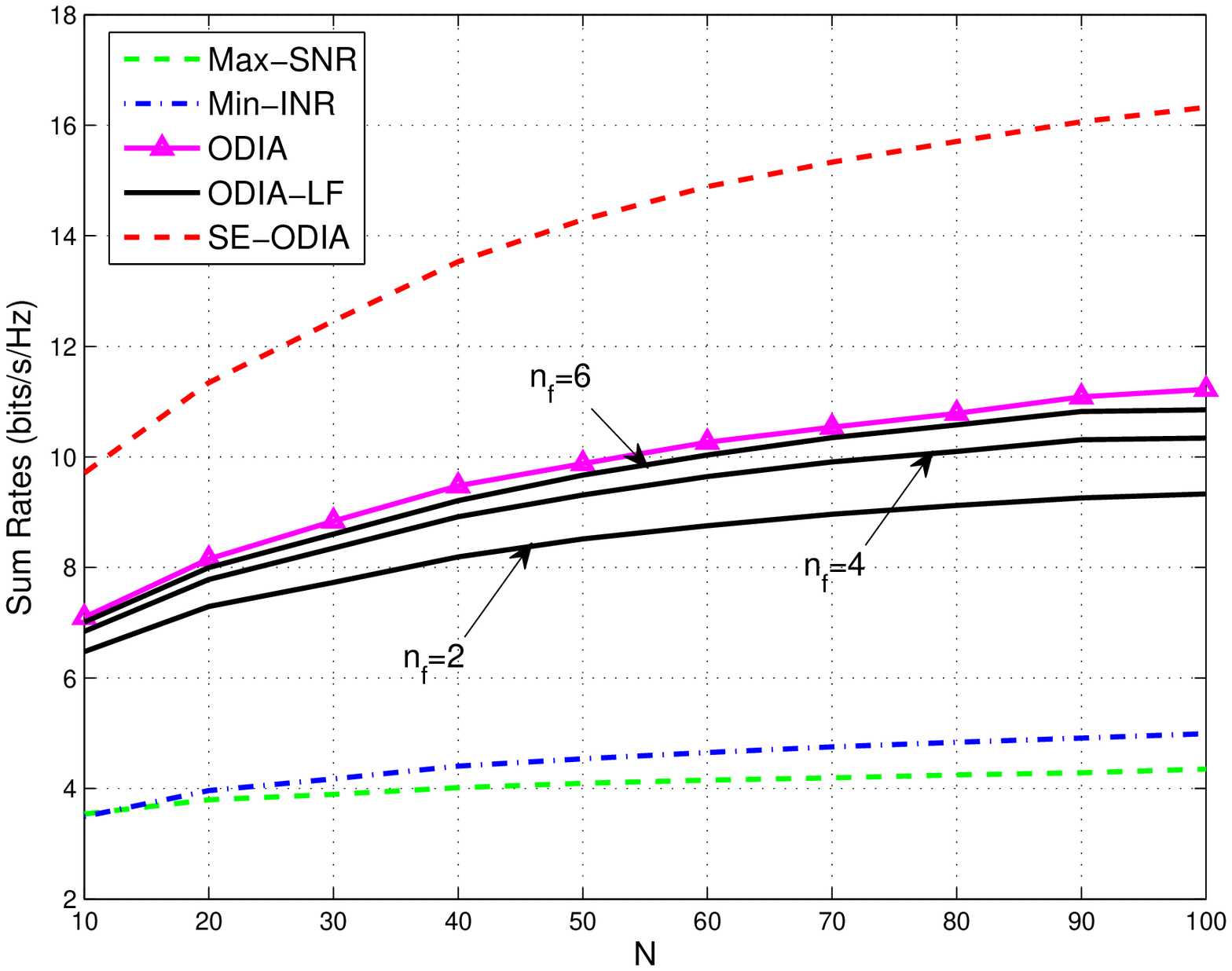}\\
  \caption{Sum-rates vs. $N$ when $K=3$, $M=4$, $L=2$, $S=2$, and SNR=20dB.}\label{fig:rates_N_linear}
  \end{center}
\end{figure}


\section{Conclusion} \label{SEC:Conc}
In this paper, we proposed an opportunistic downlink interference
alignment (ODIA) which intelligently combines user scheduling,
transmit beamforming, and receive beamforming for multi-cell
downlink networks. In the ODIA, the optimal DoF can be achieved with
more relaxed user scaling condition $N=\left(
\mathsf{SNR}^{(K-1)L-S+1}\right)$. To the best of our knowledge,
this user scaling condition is the best known to date. We also
considered a limited feedback approach for the ODIA, and analyzed
the minimum number of feedback bits required to achieve the same
user scaling condition of the ODIA with perfect feedback. We found
that both Grassmannian and random codebooks yield the same condition
on the number of required feedback bits. Finally, a spectrally
efficient ODIA (SE-ODIA) was proposed to further improve the
sum-rate of the ODIA, in which optimal multiuser diversity can be
achieved even in the presence of inter-cell interference. Through
numerical results, it was shown that the proposed ODIA schemes
significantly outperform the conventional interference management
schemes  in practical environments.

\appendices
\section{Proof of Lemma \ref{lemma:CDF_decay}} \label{app:interf_decay}
Recall that the selected users' $\eta^{[i,j]}$ are the minimum $S$
values out of $N$ i.i.d. random variables. For given $S$, suppose
the worse performance case where $N$ users are randomly divided into
$S$ subgroups with $N/S$ users per each and where one user with the
minimum $\eta^{[i,j]}$ is selected for each subgroup. Thus,
$\eta^{[i,j]}$ is the minimum of $N/S$ i.i.d. random variables. At
this point, let us define $\beta$ such that
\begin{align}\label{eq:beta_def}
\textrm{Pr}\left\{ \eta^{[i,j]} \le \frac{1}{\beta} \right\} =
\frac{S}{N}.
\end{align}
Note that since $\eta^{[i,j]}$ only decreases with respect to $N$,
$\beta$ also decreases as $N$ increases from (\ref{eq:beta_def}).
In addition, since the CDF of $\eta^{[i,j]}$ obtained from
(\ref{eq:LIF_beamforming_simple}) is the same as that of the
scheduling metric of the MIMO IMAC \cite{H_Yang13_TWC}, the CDF of
$\eta^{[i,j]}$ is given by \cite[Lemma 1]{H_Yang13_TWC}
\begin{equation} \label{eq:F_phi}
F_{\eta}(x) = c_0 x^{(K-1)S-L+1} + o\left(x^{(K-1)S-L+1} \right),
\end{equation}
where $c_0$ is a constant determined by $K$, $S$, and $L$. Thus, we
have $\textrm{Pr}\left\{ \eta^{[i,j]} \le \frac{1}{\beta} \right\} =
c_0 \beta^{-\tau} + o\left( \beta^{-\tau}\right)$ from Lemma
\ref{lemma:CDF_scaling}, where  $\tau = (K-1)S-L+1$, and hence
$\beta = \Theta \left( N^{1/\tau}\right)$.
In addition, since selected user's $1/\eta^{[i,j]}$ is the maximum
out of $N/S$ reversed scheduling metrics, it can be shown from
(\ref{eq:beta_def}) that
\begin{align}
\textrm{Pr}\left\{ \frac{1}{\eta^{[i,j]}|_{\textrm{selected}}} \le
\beta \right\} &= \left( 1- \frac{1}{N/S}\right)^{N/S}.
\end{align}
Therefore, the Markov inequality yields
\begin{align}
\label{eq:decay_last0}E\left\{ \frac{1}{\eta^{[i,j]}|_{\textrm{selected}}}  \right\}& \ge \beta \cdot \textrm{Pr} \left\{ \frac{1}{\eta^{[i,j]}|_{\textrm{selected}}} \ge \beta \right\} \\
& = \beta \cdot \left( 1- \left( 1- \frac{1}{N/S}\right)^{N/S}\right)\\
\label{eq:decay_last}& =  \Theta \left( N^{1/\tau}\right),
\end{align}
where (\ref{eq:decay_last}) follows from the fact that $\beta =
\Theta\left( N^{1/\tau}\right)$ and $\left( 1-
\frac{1}{N/S}\right)^{N/S}$ converges to a constant for increasing
$N$.

\section{Proof of Theorem \ref{th:codebook}} \label{app:th_codebook}

i) Grassmannian codebook \\
For the Grassmannian codebook, the chordal distance between any two
codewords is the same, i.e., $\sqrt{1-\left|
\mathbf{c}_i^{\mathsf{H}} \mathbf{c}_j\right|^2} = d,$, $\forall i
\neq j$. The Rankin, Gilbert-Varshamov, and Hamming bounds on the
chordal distance give us \cite{A_Barg02_TIT,J_Conway96_EM,
W_Dai08_TIT}
\begin{equation} \label{eq:dmin_bound}
{d^{[i,j]}}^2 \le  \min \left\{\frac{1}{2},
\frac{(S-1)N_f}{2S(N_f-1)}, \left(
\frac{1}{N_f}\right)^{1/(S-1)}\right\}.
\end{equation}
The bound in (\ref{eq:dmin_bound}) is reduced to the third bound as
$N_f$ increases, thus providing arbitrarily tight upper-bound on
${d^{[i,j]}}^2$. Thus,  the first term of (\ref{eq:Delta}) remains
constant if
\begin{align}
\left(d^{\max}_i\right)^2\delta_1\cdot \mathsf{SNR} & \le \left(
\frac{1}{N_f}\right)^{1/(S-1)}\delta_1\cdot \mathsf{SNR}\le
\epsilon'.
\end{align}
This is reduced to $N_f^{-1/(S-1)} \le \epsilon' \delta_1^{-1}
\mathsf{SNR}^{-1}$,
 or equivalently (\ref{eq:nf_cond0}).
Now, if (\ref{eq:nf_cond0}) holds true, $d_i^{\max}$ tends to be
arbitrarily small as SNR increases, and thus the second term of
(\ref{eq:Delta}) is dominated by the first term.
Therefore, if $n_f$ scales with respect to $\log_2(\mathsf{SNR})$ as
(\ref{eq:nf_cond0}), the residual intra-cell interference
$\Delta^{[i,j]}$ remains constant.

ii) Random codebook \\
In a random codebook, each codeword $\mathbf{c}_k$ is chosen
isotropically and independently from the $L$-dimensional hyper
sphere, and thus the maximum chordal distance of a random codebook
is unbounded. Since ${d^{[i,j]}}^2$ is the minimum of $N_f$ chordal
distances resulting from $N_f$ independent codewords, the CDF of
${d^{[i,j]}}^2$ is given by  \cite{C_AuYeung07_TWC,N_Jindal06_TIT}
\begin{equation} \label{eq:F_d_def}
F_d(z) \triangleq \textrm{Pr}\left\{{d^{[i,j]}}^2\le z\right\} =
1-\left(1-z^{S-1}\right)^{N_f}.
\end{equation}

From (\ref{eq:Delta}), the second term of (\ref{eq:P_def3}) can be
bounded by
\begin{align} \label{eq:PPP}
&\textrm{Pr} \Bigg\{\Delta^{[i,j]} \le \epsilon', \forall i\in \mathcal{K}, j\in \mathcal{S}  \Bigg\} \nonumber \\
&\hspace{10pt}\ge \textrm{Pr}\left\{\left(d^{\max}_i\right)^2\delta_1\cdot \mathsf{SNR}\le \epsilon', \forall i\in \mathcal{K}\right\} \nonumber \\
 & \hspace{30pt}\times \textrm{Pr} \left\{ \sum_{k=2}^{\infty}  \left(d^{\max}_i\right)^{2k}\delta_k \cdot \mathsf{SNR} \le \epsilon', \forall i\in \mathcal{K} \right\}.
\end{align}
Subsequently, we have
\begin{align} \label{eq:RV_P1}
&\textrm{Pr}\left\{\left(d^{\max}_i\right)^2\delta_1\cdot
\mathsf{SNR}\le \epsilon'\right\}  = \prod_{k=1}^{S}
\textrm{Pr}\left\{\left(d^{[k,i]}\right)^2\delta_1\cdot
\mathsf{SNR}\le \epsilon'\right\},
\end{align}
which follows from the fact that $d^{[k,i]}$ and $d^{[m,i]}$ are
independent for $k\neq m$. From (\ref{eq:F_d_def}) we have
\begin{align} \label{eq:RV_P1_2}
&\textrm{Pr}\left\{\left(d^{[k,i]}\right)^2\delta_1\cdot \mathsf{SNR}\le \epsilon'\right\}  \nonumber \\
&\hspace{20pt} =  1-\left(1-{\epsilon'}^{S-1}\delta_1^{-S+1}\left(
\mathsf{SNR}\right)^{-(S-1)}\right)^{N_f}.
\end{align}
Therefore, $\lim_{\mathsf{SNR}\rightarrow \infty}
\textrm{Pr}\left\{\left(d^{\max}_i\right)^2\delta_1\cdot
\mathsf{SNR}\le \epsilon'\right\}=1$  if and only if $N_f = \omega
\left( \mathsf{SNR}^{S-1}\right)$,
or equivalently (\ref{eq:nf_cond0}).
Now, if (\ref{eq:nf_cond0}) holds true, $d_{i}^{\max}$ tends to
arbitrarily small with high probability as SNR increases. Therefore,
the second term of (\ref{eq:Delta}) is dominated by the first term,
and hence $\textrm{Pr} \left\{\Delta^{[i,j]} \le \epsilon', \forall
i\in \mathcal{K}, j\in \mathcal{S}  \right\}$ in (\ref{eq:PPP})
tends to 1.

\section{Proof of Lemma \ref{lemma:N_card}} \label{app:Ns_cardinality}
Let us define the set $\Pi_s$ by
\begin{align}
\Pi_s \triangleq \left\{ \mathbf{h}\in \mathbb{C}^{S \times 1}:
\frac{{\mathbf{h}}^{\mathsf{H}} \mathbf{v}}{\| \mathbf{h}\|
\|\mathbf{v}\|}  <\alpha, \forall \mathbf{v}\in \textrm{span}\left(
\mathbf{b}_{1}^{[i]}, \ldots, \mathbf{b}_{s-1}^{[i]}\right)
\right\}.
\end{align}
Since the $s$-th user pool is determined only by checking the
orthogonality to the chosen users' channel vectors, for arbitrarily
large $N$, we have the followings by the law of large numbers:
\pagebreak[0]
\begin{align}
\label{eq:N_card1} |\mathcal{N}_s| &\approx N \cdot
\textrm{Pr}\left\{ \mathbf{h}\in \mathbb{C}^{S \times 1}:
\frac{{\mathbf{h}}^{\mathsf{H}} {\mathbf{b}_{s}^{[i]}}'}{\|
\mathbf{h}\| \|\mathbf{b}_{s'}^{[i]}\|}  <\alpha, s'=1, \ldots, s-1
\right\} \\ \pagebreak[0]
& \ge N \cdot \textrm{Pr}\left\{ \mathbf{h}\in \mathbb{C}^{S \times 1}: \mathbf{h} \in \Pi_s \right\} \\
&= N \cdot I_{\alpha^2}(s-1, S-s+1) \\ \pagebreak[0]
\label{eq:N_card2}&\ge N \cdot \alpha^{2(S-1)}, \pagebreak[0]
\end{align}
where $I_{\alpha^2}$ is the regularized incomplete beta function
(See \cite[Lemma 3]{T_Yoo06_JSAC}), and (\ref{eq:N_card2}) follows
from $I_{\alpha^2}(s-1, S-s+1) \ge I_{\alpha^2}(S-1, 1) =
\alpha^{2(S-1)}$.

\section{Proof of (\ref{eq:ps_conv})} \label{app:MUD}
Since $\left\|\mathbf{b}_{s}^{[i]}\right\|^2$ is a Chi-squared
random variable with $2(S-s+1)$ degrees-of-freedom, for $\eta_D>2$,
we have
\begin{align}
\textrm{Pr} \left\{ \mathsf{C}_2  \right\} 
& = 1-\frac{\gamma((S-s+1), \eta_D/2)}{\Gamma(S-s+1 )} = \frac{\Gamma((S-s+1), \eta_D/2)}{\Gamma(S-s+1 )}\\
& = \sum_{m=0}^{S-s} e^{-(\eta_D/2)} \frac{{(\eta_D/2)}^{m}}{m!} \\
& = \frac{e^{-(\eta_D/2)}\cdot {(\eta_D/2)}^{S-s}}{(S-s)!}\left(1+ O\left({(\eta_D/2)}^{-1}\right)\right)\\
\label{eq:C2_final}&\ge \frac{e^{-(\eta_D/2)}}{(S-s)!},
\end{align}
where $\Gamma(s,x) = \int_{x}^{\infty} t^{s-1}e^{-t}dt$ is the upper
incomplete gamma function and $\gamma(s,x)=
\int_{0}^{x}t^{s-1}e^{-t}dt$ is the lower incomplete gamma function.

Note that from the CDF of $\eta^{[i,j]}$ (See \cite[Lemma
1]{H_Yang13_TWC}), $\textrm{Pr} \left\{ \eta^{[i,j]} \le
\eta_I\right\} = c_0 \eta_I^{\tau} + o(\eta_I^{\tau})$, where $\tau
= (K-1)S-L+1$. Thus, from (\ref{eq:eta_D_choice}),
(\ref{eq:eta_I_choice}), and (\ref{eq:C2_final}),  (\ref{eq:p_s2})
can be bounded by
\begin{align} \label{eq:p_s_final}
\mathsf{p}_s & \gtrsim 1- \Bigg( 1- \left( c_0(\epsilon_I)^{\tau} {\mathsf{SNR}}^{-\tau} + \Omega\left( {\mathsf{SNR}}^{-(\tau-1)}\right)\right)\nonumber \\
& \hspace{100pt} \times \frac{N^{-(\epsilon_D/2)}}{(S-s)!}\Bigg)^{N
\cdot \alpha^{2(S-1)}}.
\end{align}
The right-hand side of (\ref{eq:p_s_final}) converges to 1 for
increasing SNR if and only if
\begin{align}\label{eq:scaling_MUD}
&\lim_{\mathsf{SNR}\rightarrow \infty} \left(N \cdot \alpha^{2(S-1)}\right) \cdot \left( c_0 (\epsilon_I)^{\tau} {\mathsf{SNR}}^{-\tau} + \Omega\left( {\mathsf{SNR}}^{-(\tau-1)}\right)\right)\nonumber \\
& \hspace{50pt}\times  \frac{N^{-(\epsilon_D/2)}}{(S-s)!}
\rightarrow \infty.
\end{align}
Since the left-hand side of (\ref{eq:scaling_MUD}) can be written by
$ \tilde{c}_0\frac{N^{1-(\epsilon_D/2)}}{\mathsf{SNR}^{\tau}} +
\tilde{c}_1\frac{N^{1-(\epsilon_D/2)}}{o\left(\mathsf{SNR}^{\tau}\right)}$,
where $\tilde{c}_0$ and $\tilde{c}_1$ are positive constants
independent of SNR and $N$, it tends to infinity for increasing SNR,
and thereby  $\mathsf{p}_s$ tends to 1 if and only if $N = \omega
\left( \mathsf{SNR}^{\frac{(K-1)S-L+1}{1-(\epsilon_D/2)}}\right)$.



\end{document}